\theoremstyle{plain}
\newtheorem{thm}{Theorem}[section]
\newtheorem{lemma}[thm]{Lemma}
\newtheorem{coro}[thm]{Corollary}
\newtheorem{prop}[thm]{Proposition}
\newtheorem{defi}[thm]{Definition}
\theoremstyle{remark}
\newtheorem{rema}[thm]{\bf Remark}
\numberwithin{equation}{section}
\newenvironment{ieee*}[1]{\begin{IEEEeqnarray*}{#1}}{\end{IEEEeqnarray*}\ignorespacesafterend}
\newcommand{\ie}{\emph{i.e.}}
\newcommand{\eqdef}{\vcentcolon=}
\newcommand{\R}{\mathbb{R}}
\newcommand{\Z}{\mathbb{Z}}
\newcommand{\N}{\mathbb{N}}
\newcommand{\C}{\mathbb{C}}
\newcommand{\eps}{\varepsilon}
\newcommand{\Prob}{\mathcal{P}}
\newcommand{\rnd}{(\R^d)^N}
\newcommand{\gra}[1]{\left \{ #1 \right \}}
\newcommand{\st}{\left | \right.}
\newcommand{\abs}[1]{\left| #1 \right|}
\newcommand{\norm}[1]{\left \lVert {#1} \right \rVert}
\newcommand{\onab}{\nabla_\lambda}
\newcommand{\Con}{\mathcal{C}}
\DeclareMathOperator{\Lip}{Lip}
\title{From wave-functions to single electron densities}
\author{U. Bindini}
\address{Scuola Normale Superiore \\ Piazza dei Cavalieri 7, 56126 Pisa \\ Italy}
\email{ugo.bindini@sns.it}
\author{L. De Pascale}
\address{Dipartimento di Matematica e Informatica, Università di Firenze \\ Viale Morgagni 67/a, 50134 Firenze \\ Italy}
\email{luigi.depascale@unifi.it}
\urladdr{http://web.math.unifi.it/users/depascal/}
\date{\today}
\begin{document}

\begin{abstract}
We investigate some of the properties of the mapping from wave-functions to single particle densities, partially answering an open question posed by E. H. Lieb in 1983.
\end{abstract}

\maketitle

\medskip
\textbf{Keywords:} Density Functional Theory, Monge-Kantorovich problem, Quantum Physics, Coulomb Systems.

\textbf{2010 Mathematics Subject Classification:} 49J45, 49N15, 49K30

%%%%%%%%%%%%%%%%%%%%%%%%%%%%%%%%%%%%%%%%%%%%%%%  ARTICOLO 

\section{Introduction}

Consider a system of $N$ charged particles, interacting with each other due to the Coulomb force. The coordinates to describe the system will be $N$ space-spin variables $z_1,\dotsc, z_N$, where $z_j = (x_j,s_j)$, with $x_j \in \R^d$, $s_j \in \gra{0,\dotsc,q}$. Here $x_j$ represents the position of the $j$-th particle, $s_j$ its spin. Most of the time one can consider $d=3$, the physical case, but we do not want to limit ourselves to this specific value, so we keep a generic dimension of the space. We will often group the space variables as $X = (x_1,\dotsc,x_N) \in (\R^d)^N$.

The description of the system will be given by a complex-valued wave-function $\psi(z_1,\dotsc,z_N)$, belonging to $H^1((\R^d \times \Z_q)^N;\C)$ and always normalized with
\[
\sum_{s_1,\dotsc,s_N} \int_{(\R^d)^N} \abs{\psi(x_1,s_1,\dotsc,x_N,s_N)}^2 dX = 1,
\]
which may be viewed in the following way (Born interpretation): $\abs{\psi(x_1,s_1,\dotsc,x_N,s_N)}^2$ is the probability density that the particles are in the positions $x_j$ with spin $s_j$.

If the system is composed of bosons, Bose-Einstein statistics apply, and hence the wave-function $\psi$ must be symmetric, in the sense that
\[
\psi(z_1,\dotsc,z_N) = \psi(z_{\sigma(1)},\dotsc,z_{\sigma(N)})
\]
for every permutation $\sigma \in \mathfrak{S}_N$. Thus we have a class of \textit{bosonic} wave-functions given by
\[
\mathcal{S} = \gra{ \psi \st \psi \in H^1((\R^d\times \Z_q)^N), \psi \text{ is symmetric}}.
\]

On the other hand, one can consider a system of fermions, obeying Fermi-Dirac statistics, \ie,
\[
\psi(z_1,\dotsc,z_N) = \mathrm{sign} (\sigma) \psi(z_{\sigma(1)},\dotsc,z_{\sigma(N)})
\]
for every permutation $\sigma \in \mathfrak{S}_N$. Thus we have the class of \textit{fermionic} wave-functions given by
\[
\mathcal{A} = \gra{ \psi \st \psi \in H^1((\R^d\times \Z_q)^N), \psi \text{ is antisymmetric}}.
\]

If $\psi$ is a wave-function, we introduce the single particle density
\[
\rho[\psi](x) = \sum_{s_1, \dotsc, s_N} \int_{(\R^d)^{N-1}} \abs{\psi(x,s_1,x_2,s_2,\dotsc,x_N,s_N)}^2 dx_2 \dotsm dx_N,
\]
where, by symmetry, the integral might be done with respect to any choice of $N-1$ space variables. To consider the single particle density, rather than the whole wave-function, is quite natural from the physical point of view, since charge density is a fundamental quantum-mechanical observable, directly obtainable from experiment. 

We can then consider the maps
\[
\begin{array}{rccc}
\Phi^{\mathcal{S}} \colon & \mathcal{S} & \to & \mathcal{P}(\R^d) \\
& \psi & \mapsto & \rho[\psi]
\end{array}
\quad \text{and} \quad
\begin{array}{rccc}
\Phi^{\mathcal{A}} \colon & \mathcal{A} & \to & \mathcal{P}(\R^d) \\
& \psi & \mapsto & \rho[\psi],
\end{array}
\]
where $\Prob(\R^d)$ denotes the space of probability measures over $\R^d$. The properties of these maps have been studied \textit{in primis} by Lieb \cite{lieb2002density}. In particular we have
\begin{itemize}
  \item the range of $\Phi^{\mathcal{S}}$ and $\Phi^{\mathcal{A}}$ is exactly the set
  \[
  \mathcal{R} = \gra{\rho \in L^1(\R^d) \st \rho \geq 0, \sqrt{\rho} \in H^1(\R^d), \int_{\R^d} \rho(x) dx = 1};
  \]
  
  \item $\Phi^{\mathcal{S}}$ and $\Phi^{\mathcal{A}}$ are continuous with respect to the $H^1$ norms: if $\psi_k \to \psi$ in $H^1$, then $\sqrt{\rho[\psi_k]} \to \sqrt{\rho[\psi]}$ in $H^1$.
\end{itemize}

It is quite clear that the maps $\Phi^{\mathcal{S}}$ and $\Phi^{\mathcal{A}}$ are not invertible, even restricting the codomain to $\mathcal{R}$ --- in fact, different $\psi$'s may have the same single particle density. However, suppose that $(\sqrt{\rho_k})_{k \geq 1}$ converge to $\sqrt{\rho}$ in $H^1$, and take $\psi$ such that $\rho = \rho[\psi]$. Can we find $(\psi_k)_{k \geq 1}$ such that $\rho_k = \rho[\psi_k]$ and $\psi_k \to \psi$ in $H^1$? In other words, are $\Phi^{\mathcal{A}}$ and $\Phi^{\mathcal{S}}$ open? This problem, to our knowledge first stated in \cite[Question 2]{lieb2002density}, is still open.

In this paper we will give a partial positive answer, proving the following results when $q=1$ (\ie, without taking into account the spin variables).

\begin{thm} \label{main-thm-1}
  Let $\psi \in \mathcal{S}$ non-negative. Given $(\rho_n)_{n \geq 1}$ such that $\sqrt{\rho_n} \to \sqrt{\rho[\psi]}$ in $H^1(\R^d)$, there exist $(\psi_n)_{n \geq 1}$ symmetric and non-negative such that $\rho_n = \rho[\psi_n]$ and $\psi_n \to \psi$ in $H^1((\R^d)^N)$.
\end{thm}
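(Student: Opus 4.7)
The natural plan is to build $\psi_n$ from $\psi$ by pushing forward through a transport map on each coordinate. Fix, for each $n$, a measurable map $T_n\colon \R^d \to \R^d$ with $T_n{\#}\rho = \rho_n$, where $\rho \eqdef \rho[\psi]$; for $d=1$ this is the monotone rearrangement, and for $d\ge 2$ one takes the Brenier map from $\rho$ to $\rho_n$. Then set
\[
  \psi_n(x_1,\dotsc,x_N) \eqdef \psi\bigl(T_n^{-1}(x_1),\dotsc,T_n^{-1}(x_N)\bigr) \prod_{i=1}^N \sqrt{\abs{\det DT_n^{-1}(x_i)}}.
\]
Since $T_n$ is applied separately to each coordinate, $\psi_n$ is symmetric whenever $\psi$ is and $\psi_n\ge 0$ since $\psi\ge 0$. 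A single change of variables in $N-1$ of the integration variables, using the pushforward identity $\rho_n(x)=\abs{\det DT_n^{-1}(x)}\,\rho(T_n^{-1}(x))$, yields $\rho[\psi_n]=\rho_n$ and in particular $\norm{\psi_n}_{L^2}=1$.

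\textbf{Convergence.} The assumption $\sqrt{\rho_n}\to\sqrt\rho$ in $H^1$ forces $\rho_n\to\rho$ in $L^1(\R^d)$, and stability of the Brenier/monotone map then gives $T_n\to\mathrm{id}$ in $\rho$-measure (and in $L^2(\rho)$ up to moment control). Convergence $\psi_n\to\psi$ in $L^2((\R^d)^N)$ then follows by a change of variables and dominated convergence. For the $H^1$-convergence, the chain rule produces two contributions to $\nabla_{x_i}\psi_n$: one involving $(DT_n^{-1}(x_i))^{T}\,(\nabla_i\psi)\circ (T_n^{-1})^{\otimes N}$ times the Jacobian factor, and one involving $\nabla \det DT_n^{-1}$. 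So the key analytic task is to show that $DT_n^{-1}\to I$ and $\det DT_n^{-1}\to 1$ strongly enough that, after the same change of variables, both terms are controlled by the $H^1$-convergence of $\sqrt{\rho_n}$ and (via dominated convergence) by $\nabla\psi\in L^2$.

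\textbf{Regularity reduction and main obstacle.} Since Brenier maps are in general only $BV$, and since $\rho$ may vanish or fail to be bounded above, I would first reduce to a smooth setting: approximate $\rho$ and each $\rho_n$ by smooth strictly positive densities $\rho^\eps,\rho_n^\eps$ on a common smooth compact domain (mollification plus a small admixture of a fixed smooth density, arranged so that $\sqrt{\rho^\eps}\to\sqrt\rho$ in $H^1$), and similarly mollify $\psi$ to a smooth symmetric $\psi^\eps$. In this regime either Caffarelli's regularity theorem or, more directly, Moser's flow construction produces smooth diffeomorphisms $T_n^\eps$, for which the gradient computation is routine. A diagonal extraction $\eps=\eps_n\to 0$ slowly enough then yields the conclusion. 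The delicate point — and the main obstacle — is quantitative: one must show that the convergence of $T_n^\eps$ to the identity and of $\det DT_n^\eps$ to $1$ is strong enough in $L^2$ weighted by $\abs{\nabla\psi}^2$ and $\psi^2$ to kill the Jacobian error terms, uniformly as the regularization is removed. It is precisely here that the non-negativity of $\psi$ is used in an essential way: it reduces the problem to one about probability densities, which can be handled by the transport framework, without the cancellation/phase issues that obstruct a direct construction for a general wave function.
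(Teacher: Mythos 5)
Your transport-map approach is genuinely different from what the paper does, and it has a gap that I do not think can be closed at this level of generality. The paper never constructs a map $T_n$ at all; it builds the $L^2$ approximants $\varphi_n$ by an iterative ``shaving'' procedure (multiplying $\varphi$ by $\sqrt{1-S_n^k}$, reducing mass wherever the current marginal exceeds $\rho_n$, then adding a product tail to restore the marginal exactly), and then upgrades to $H^1$ with a dedicated smoothing operator $\Theta^\eps$ that convolves $|\varphi_n|^2$ and post-composes with a Kantorovich kernel chosen precisely so that the one-particle marginal is preserved. Theorems 3.3--3.4 (from their companion paper) then give $L^2$-to-$H^1$ continuity in the argument and $H^1$ continuity as $\eps\to 0$, and a diagonal choice $\eps(n)$ finishes. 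This design deliberately avoids any regularity or stability question about optimal transport maps.

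The obstacle you flag in your own last paragraph is the real one, and it is not a technical loose end but the crux. Your $\nabla\psi_n$ contains a term with $\nabla\det DT_n^{-1}$, i.e.\ \emph{second} derivatives of $T_n$, and to make the full $H^1$ convergence work you would need $DT_n^{-1}\to I$ and $\nabla\det DT_n^{-1}\to 0$ in $L^2$ weighted by $\psi^2$ and $|\nabla\psi|^2$. The hypothesis $\sqrt{\rho_n}\to\sqrt{\rho}$ in $H^1$ gives $L^1$ convergence of densities (hence Wasserstein stability of the maps, i.e.\ $T_n\to\mathrm{id}$ in $L^2(\rho)$), but says nothing quantitative about $DT_n$ or $D^2T_n$; there is no known stability estimate of this strength for Brenier maps without uniform two-sided density bounds and convexity of supports. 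The proposed fix by mollification does not repair this: Caffarelli's bounds on $T_n^\eps$ degenerate as $\eps\to 0$ with $\|\rho^\eps\|_\infty$, $1/\inf\rho^\eps$ and the geometry of the support, so ``take $\eps_n\to0$ slowly enough'' has no quantitative content to diagonalize against. There is also a bookkeeping gap: after replacing $\rho,\rho_n$ by $\rho^\eps,\rho_n^\eps$ the resulting $\psi_n$ has marginal $\rho_n^\eps$, not $\rho_n$, so the exact constraint $\rho[\psi_n]=\rho_n$ is lost and you would need a further correction step of exactly the kind the paper's Section~2 provides. In short, your strategy correctly identifies symmetry, positivity and the marginal identity as easy, but the $H^1$ convergence of the gradients is where the scheme breaks, and the paper's construction is engineered specifically to bypass that.
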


\begin{thm} \label{main-thm-2}
  Let $\psi \in \mathcal{S}$ real-valued. Given $(\rho_n)_{n \geq 1}$ such that $\sqrt{\rho_n} \to \sqrt{\rho[\psi]}$ in $H^1(\R^d)$, there exist $(\psi_n)_{n \geq 1}$ symmetric and complex-valued such that $\rho_n = \rho[\psi_n]$ and $\psi_n \to \psi$ in $H^1((\R^d)^N)$.
\end{thm}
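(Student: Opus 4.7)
My plan is to reduce Theorem~\ref{main-thm-2} to Theorem~\ref{main-thm-1} by first producing a non-negative approximation of $|\psi|$ and then restoring the sign of $\psi$ through a complex unimodular phase. Since $\psi\in\mathcal{S}$ is real-valued, $|\psi|\in\mathcal{S}$ is non-negative with $\rho[|\psi|]=\rho[\psi]$ (the density depends only on the modulus squared) and $\nabla|\psi|=\mathrm{sign}(\psi)\nabla\psi$ almost everywhere. Applying Theorem~\ref{main-thm-1} to $|\psi|$ with the given $(\rho_n)$ yields symmetric non-negative $\phi_n$ such that $\rho[\phi_n]=\rho_n$ and $\phi_n\to|\psi|$ in $H^1$.

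Next I set $\psi_n(X):=\phi_n(X)\,e^{i\theta_n(X)}$, where $\theta_n(X)=\Theta(\psi(X)/\eps_n)$ for a fixed smooth decreasing function $\Theta\colon\R\to[0,\pi]$ with $\Theta\equiv 0$ on $[1,+\infty)$ and $\Theta\equiv\pi$ on $(-\infty,-1]$, and $\eps_n\to 0$ is to be chosen. Since $|e^{i\theta_n}|\equiv 1$, one has $|\psi_n|^2=\phi_n^2$ and hence $\rho[\psi_n]=\rho_n$ automatically; symmetry is preserved since $\theta_n$ depends on $X$ only through the symmetric function $\psi$. Outside the transition strip $T_n:=\{|\psi|\leq\eps_n\}$, $e^{i\theta_n}$ equals $\mathrm{sign}(\psi)\in\{\pm 1\}$; combined with the $L^2$ convergence $\phi_n\to|\psi|$ and $\int_{T_n}\psi^2\,dX\to 0$, this gives $\psi_n\to\psi$ in $L^2$. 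Because $\Theta'$ is real, no cross term appears and $|\nabla\psi_n|^2=|\nabla\phi_n|^2+\eps_n^{-2}\phi_n^2|\Theta'(\psi/\eps_n)|^2|\nabla\psi|^2$; outside $T_n$ the first summand delivers $\nabla\psi_n\to\mathrm{sign}(\psi)\nabla|\psi|=\nabla\psi$ in $L^2$.

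The main obstacle is controlling the transition-strip contribution $\eps_n^{-2}\int_{T_n}\phi_n^2|\nabla\psi|^2\,dX$. Using $\phi_n^2\leq 2\psi^2+2(\phi_n-|\psi|)^2$, the $\psi^2$ piece is majorized by $2\int_{T_n}|\nabla\psi|^2\,dX$, which tends to $0$ because $\nabla\psi=0$ almost everywhere on $\{\psi=0\}$ (classical chain-rule property of Sobolev functions). The residual term $\eps_n^{-2}\int(\phi_n-|\psi|)^2|\nabla\psi|^2\,dX$ is the delicate one: when $|\nabla\psi|\in L^\infty$ it is bounded by $\|\nabla\psi\|_\infty^2\|\phi_n-|\psi|\|_{L^2}^2/\eps_n^2$, which can be made to vanish by choosing $\eps_n\to 0$ much more slowly than $\|\phi_n-|\psi|\|_{L^2}$ (for instance $\eps_n=\|\phi_n-|\psi|\|_{L^2}^{1/2}$). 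This settles the case $\psi\in\mathcal{S}\cap W^{1,\infty}$. For general $\psi\in\mathcal{S}$, where only $\nabla\psi\in L^2$ is available, I would conclude by approximating $\psi$ in $H^1$ by smooth compactly supported real symmetric functions $\psi^{(k)}$ (whose densities satisfy $\sqrt{\rho[\psi^{(k)}]}\to\sqrt{\rho[\psi]}$ in $H^1$ by the stated continuity of $\Phi^{\mathcal{S}}$), applying the smooth case to auxiliary density sequences chosen so as to match $\rho_n$ in the limit, and extracting a diagonal subsequence with a final density adjustment via Theorem~\ref{main-thm-1}.
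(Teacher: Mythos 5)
Your strategy mirrors the paper's at the structural level (produce $\phi_n \to |\psi|$ in $H^1$ with $\rho[\phi_n]=\rho_n$, then multiply by a unimodular complex phase supported on the upper half-circle), but it diverges in how the phase is built, and there the argument has a genuine gap. The paper writes $\psi=e|\psi|$ with $e\colon \rnd\to\{\pm1\}$, proves $e\in H^1(|\psi|^2\,dX)$ with $\onab e=0$ and $e\in L^2(|\nabla\psi|^2\,dX)$ (Lemma~\ref{regolarite}), approximates $e$ by globally Lipschitz smooth $e_n$ in \emph{both} weighted spaces (Theorem~\ref{smoothe}), and then sets $\psi_n=\omega(e_n)\varphi_n$; the only quantity needing extra control is $\Lip(e_n)\,\|\varphi_n-|\psi|\|_{L^2}$, which is forced to vanish by the diagonal extraction of Corollary~\ref{LipCoro}. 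By contrast, you tie the phase to $\psi$ itself via $\theta_n=\Theta(\psi/\eps_n)$, so $\nabla\theta_n=\eps_n^{-1}\Theta'(\psi/\eps_n)\nabla\psi$ is \emph{not} bounded unless $\nabla\psi\in L^\infty$; the term $\eps_n^{-2}\int(\phi_n-|\psi|)^2|\nabla\psi|^2\,dX$ cannot be controlled by picking $\eps_n$ alone once $\nabla\psi$ is merely $L^2$, because $\phi_n\to|\psi|$ in $L^2(dX)$ gives no information about convergence in the weighted space $L^2(|\nabla\psi|^2\,dX)$.

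The closing paragraph does not repair this. Approximating $\psi$ by smooth compactly supported $\psi^{(k)}$ changes the reference density to $\rho[\psi^{(k)}]$, and the given sequence $(\rho_n)$ converges to $\rho[\psi]$, not to $\rho[\psi^{(k)}]$, so the hypotheses of your $W^{1,\infty}$ case are never met for fixed $k$; and a ``final density adjustment via Theorem~\ref{main-thm-1}'' is unavailable, since that theorem operates on non-negative wave-functions while your diagonal candidate is complex-valued. So the reduction from the general case to the $W^{1,\infty}$ case is, as written, a sketch rather than a proof. The paper's route through the weighted Sobolev space $H^1(|\psi|^2 dX)\cap L^2(|\nabla\psi|^2 dX)$ is precisely what avoids needing any $L^\infty$ control on $\nabla\psi$: the approximating signs $e_n$ are constructed with their own small Lipschitz constants, decoupled from $\psi$, and the only coupling to the $\varphi_n$'s enters through the product $\Lip(e_n)\,\|\varphi_n-|\psi|\|_{L^2}$, which can always be made to vanish along a subsequence.
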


Notice that the first result is already of physical interest, since in many cases the ground state of a system of $N$-particles is non-negative.

\subsection*{Outline of the paper}

The main tools will be the smoothing of transport plans as introduced and studied in \cite{bindini2019smoothing, bindini2017optimal} and an application of the weighted Sobolev spaces. In Section \ref{L2-sequence} we will start by constructing explicitly an $L^2$ approximation of $\abs{\psi}$ which respects the marginal constraint. Then in Section \ref{H1-regularity} we regularize it in order to obtain a Sobolev regular sequence which converges in $H^1((\R^d)^N)$ to $\abs{\psi}$ and still maintains the marginal constraint. This will complete the proof of Theorem \ref{main-thm-1}. In the final Section \ref{sign-section}, making use of a suitable weighted Sobolev space, we show how to deal with the sign of the wave-function, finally proving the main result in its completeness.

\subsection*{Acknoledgements}
The authors wish to thank Simone Di Marino, Gero Friesecke, Paola Gori-Giorgi and Mathieu Lewin for the useful conversations on the topic of this paper. Part of the work and a preliminary presentation of it was done at the Banff International Research Station. The first author is grateful to the Scuola Normale Superiore (Pisa), and the second author is grateful to the Università di Firenze for the financial support.

%%%%%%%%%%%%%%%%%%%%%%%%%%%%%%%%%%%%%%%%%%%%%%%%%
%%%%%%%%%%%%%%%%%%%%%%%%%%%%%%%%%%%%%%%%%%%%%%%%%
%%%%%%%%%%%%%%%%%%%%%%%%%%%%%%%%%%%%%%%%%%%%%%%%%

\section{Construction of $L^2$ wavefunctions} \label{L2-sequence}

In this section we start the construction by proving the following

\begin{thm} \label{main-thm}
 Let $\rho_n, \rho \in L^1(\R^d)$ such that $\sqrt{\rho_n} \to \sqrt{\rho}$ in $L^2(\R^d)$, $\int \rho = \int \rho_n = 1$ and let $\varphi \in L^2(\rnd)$ symmetric, $\varphi \geq 0$, such that $\rho = \rho[\varphi]$. Then there exists a sequence $(\varphi_n) \subseteq L^2(\rnd)$ such that $\varphi_n$ is symmetric, $\rho_n = \rho[\varphi_n]$ and $\varphi_n \to \varphi$ in $L^2(\rnd)$.
\end{thm}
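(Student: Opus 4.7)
The plan is to construct $\varphi_n^2$ as the $N$-point marginal of a coupled measure obtained by pushing $\varphi^2$ forward along an $N$-fold tensor product of a well-chosen one-variable transition kernel. Concretely: build a probability kernel $k_n(x,\cdot)$ on $\R^d$ such that $\int \rho(x) k_n(x, dy) \, dx = \rho_n(y) \, dy$, and set
\[
\mu_n(dY) = \int_{\rnd} \varphi^2(X) \prod_{j=1}^N k_n(x_j, dy_j) \, dX,
\]
with $X = (x_1,\dots,x_N)$ and $Y = (y_1,\dots,y_N)$. Symmetry of $\varphi^2$ makes $\mu_n$ symmetric in $Y$; the marginal property of $k_n$ ensures that each single-variable marginal of $\mu_n$ equals $\rho_n$. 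The candidate is $\varphi_n \eqdef \sqrt{d\mu_n/dY}$, provided $\mu_n$ is absolutely continuous with respect to Lebesgue.

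For $k_n$ I would use the maximal coupling of $\rho$ and $\rho_n$: with $p_n = \rho \wedge \rho_n$ (pointwise minimum) and $\lambda_n = \int (\rho - p_n) \, dx = \int (\rho_n - p_n) \, dx$, set on $\{\rho > 0\}$
\[
k_n(x, dy) = \frac{p_n(x)}{\rho(x)} \delta_x(dy) + \frac{\rho(x) - p_n(x)}{\rho(x)\,\lambda_n} (\rho_n(y) - p_n(y)) \, dy,
\]
extended arbitrarily on $\{\rho = 0\}$ (the choice is irrelevant since $\varphi$ vanishes whenever any $\rho(x_j) = 0$). A direct calculation verifies $\int k_n(x, dy) = 1$ and $\int \rho(x) k_n(x, dy)\,dx = \rho_n(y)\,dy$. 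Since the hypothesis $\sqrt{\rho_n} \to \sqrt{\rho}$ in $L^2$ forces $\rho_n \to \rho$ in $L^1$ (via $|\rho_n-\rho| = |\sqrt{\rho_n}-\sqrt{\rho}|(\sqrt{\rho_n}+\sqrt{\rho})$ and Cauchy--Schwarz), we have $\lambda_n \to 0$, so $k_n$ concentrates on the diagonal.

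Expanding $\prod_j k_n(x_j, dy_j)$ over subsets $S \subseteq \{1, \dots, N\}$ (``Dirac on $S$, absolutely continuous on $S^c$''), the Dirac factors are absorbed by integration against $\varphi^2 \in L^1$, so $\mu_n$ is absolutely continuous with some density $q_n \ge 0$. The fully diagonal piece ($S = \{1, \dots, N\}$) equals $\varphi^2(Y) \prod_j (p_n/\rho)(y_j) \le \varphi^2(Y)$; the elementary inequality $\prod_j (1 - a_j) \ge 1 - \sum_j a_j$ gives
\[
\int \varphi^2(Y) \Bigl(1 - \prod_j (p_n/\rho)(y_j)\Bigr) dY \le N \lambda_n \to 0,
\]
so by dominated convergence (with dominant $\varphi^2$) the diagonal piece tends to $\varphi^2$ in $L^1(\rnd)$. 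The off-diagonal pieces are non-negative with total mass $1 - \|\text{diag}\|_{L^1} \to 0$. Combining the two, $q_n \to \varphi^2$ in $L^1(\rnd)$.

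To upgrade this to $\varphi_n \to \varphi$ in $L^2$, I would invoke the Brezis--Lieb lemma: we have $\|\varphi_n\|_{L^2}^2 = \int q_n = 1 = \|\varphi\|_{L^2}^2$, so extracting a pointwise-a.e.\ convergent subsequence from $q_n \to \varphi^2$ yields $\varphi_n \to \varphi$ a.e.\ along that subsequence, and the equal $L^2$-norms force strong $L^2$-convergence; uniqueness of the limit then gives convergence of the full sequence. The main technical obstacle I foresee is the careful bookkeeping of the $2^N$ terms in the expansion of $\prod_j k_n$ and their behaviour near $\{\rho = 0\}$, where $k_n$ is not intrinsically defined but must still produce the correct mass for $\rho_n$ supported outside $\supp\rho$; the maximal-coupling structure is precisely what makes the diagonal dominant and the error controllable.
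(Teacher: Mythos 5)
Your proof is correct, but it takes a genuinely different route from the paper's. The paper builds $\varphi_n$ by an \emph{iterative} multiplicative scheme: starting from $\varphi^0_n = \varphi$ it repeatedly multiplies by a factor $\sqrt{1-S^k_n(X)}$ (where $S^k_n$ averages, over the $N$ coordinates, the relative excess of the current marginal over $\rho_n$ on the set where there is excess), proves monotone convergence of $\varphi^k_n$ as $k\to\infty$, shows the limiting marginal $\sigma^\infty_n$ sits \emph{below} $\rho_n$, and then patches the deficit by adding the symmetric product correction $\alpha_n(X)=q_n^{1-N}\prod_j(\rho_n(x_j)-\sigma^\infty_n(x_j))$. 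You instead do a \emph{single} transport step: push $\varphi^2$ forward through $\bigotimes_{j=1}^N k_n(x_j,\cdot)$, where $k_n$ is the maximal coupling kernel from $\rho$ to $\rho_n$, and take the square root of the resulting density. Your verification that the tensorized kernel is marginal-correct, that $\mu_n$ is absolutely continuous after the $2^N$-term expansion in subsets $S$, that the fully-diagonal piece dominates (via the Bernoulli-type bound $1-\prod_j(p_n/\rho)(y_j)\le\sum_j\bigl(1-(p_n/\rho)(y_j)\bigr)$), and the resulting estimate $\|q_n-\varphi^2\|_{L^1}\le 2N\lambda_n$ with $\lambda_n\le 2\|\sqrt\rho-\sqrt{\rho_n}\|_{L^2}$, is all sound; both constructions give essentially the same quantitative rate.

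One simplification you overlook: you do not need Brezis--Lieb (nor the subsequence/a.e.\ extraction bookkeeping) to pass from $L^1$ convergence of $q_n$ to $L^2$ convergence of $\sqrt{q_n}$. The elementary inequality $\bigl|\sqrt a-\sqrt b\bigr|^2\le |a-b|$ for $a,b\ge 0$ gives directly
\[
\|\varphi_n-\varphi\|_{L^2}^2=\int \bigl|\sqrt{q_n}-\sqrt{\varphi^2}\bigr|^2\,dX\le\int|q_n-\varphi^2|\,dX\longrightarrow 0,
\]
with no appeal to equal $L^2$ masses or refining subsequences. This is precisely the inequality implicit in the paper's final estimate as well. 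With that change your argument closes cleanly and is, if anything, more direct than the iterative scheme in the paper; the only part that deserves a line of care in a write-up is the convention at $\{\rho=0\}$ (where you should set $p_n/\rho=0$ and observe that every integrand involving $\varphi^2$ vanishes there), and the degenerate case $\lambda_n=0$ where you simply take $\varphi_n=\varphi$.
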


For fixed $n \in \N$, let $\sigma^0_n = \rho$ and $\varphi^0_n = \varphi$ and define inductively for $k \geq 0$
\begin{align*}
 E_n^k &= \gra{x \in \R^d \st \sigma_n^k(x) > \rho_n(x)}, \\
 S_n^k(X) &= \frac{1}{N} \sum_{j = 1}^N \frac{\sigma_n^k(x_j) - \rho_n(x_j)}{\sigma_n^k(x_j)} \chi_{E_n^k}(x_j), \\
 \varphi_n^{k+1}(X) &= \varphi_n^k(X) \sqrt{1 - S_n^k(X)}, \\
 \sigma_n^{k+1}(x) &= \int \varphi_n^{k+1}(x,x_2,\dotsc,x_N)^2 dx_2 \dotsm dx_N.
\end{align*}

Notice that, for every $k,n$, the function $\varphi^k_n$ is symmetric. The sequence $(\varphi^k_n)_{k \geq 0}$ is monotone decreasing, as proved in the following

\begin{lemma} \label{monotonicity-lemma}
 \begin{enumerate}[(i)]
  \item $0 \leq \varphi_n^{k+1} \leq \varphi_n^{k} \leq \varphi$;
  \item $0 \leq \sigma_n^{k+1} \leq \sigma_n^k \leq \rho$;
  \item $E_n^{k+1} \subseteq E_n^{k} \subseteq E_n^0$.
 \end{enumerate}
\end{lemma}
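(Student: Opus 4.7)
The plan is to prove all three claims simultaneously by induction on $k$, since they are coupled: (i) at level $k$ implies (ii) at level $k$ (by integrating), (ii) at level $k$ combined with the definition of $E_n^{k}$ gives an upper bound on $S_n^{k}$, and this in turn produces (i) at level $k+1$ and (iii) at level $k+1$.

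The base case $k = 0$ is immediate: by construction $\sigma_n^0 = \rho$ and $\varphi_n^0 = \varphi$, and $\varphi \geq 0$ by hypothesis. For the inductive step, assume (i)--(iii) hold at level $k$. The first key observation is that $S_n^k(X) \in [0,1]$ pointwise. Indeed, each summand
\[
\frac{\sigma_n^k(x_j) - \rho_n(x_j)}{\sigma_n^k(x_j)} \chi_{E_n^k}(x_j)
\]
vanishes off $E_n^k$, and on $E_n^k$ we have $\sigma_n^k(x_j) > \rho_n(x_j) \geq 0$, so the ratio lies in $[0,1)$. Averaging over $j = 1,\dotsc,N$ keeps the value in $[0,1]$. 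Consequently $\sqrt{1-S_n^k}$ is well-defined, real-valued, and lies in $[0,1]$, so from the defining formula $\varphi_n^{k+1} = \varphi_n^k\sqrt{1-S_n^k}$ we obtain $0 \leq \varphi_n^{k+1} \leq \varphi_n^k$; combining with the inductive hypothesis $\varphi_n^k \leq \varphi$ yields (i) at level $k+1$.

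For (ii), squaring the pointwise inequality in (i) and integrating out $x_2,\dotsc,x_N$ gives $\sigma_n^{k+1}(x) \leq \sigma_n^k(x) \leq \rho(x)$ (the last inequality by the inductive hypothesis). For (iii), the containment $E_n^{k+1} \subseteq E_n^k$ is just the contrapositive of the chain $\sigma_n^{k+1} \leq \sigma_n^k$: if $x \notin E_n^k$ then $\sigma_n^k(x) \leq \rho_n(x)$, and (ii) at level $k+1$ then forces $\sigma_n^{k+1}(x) \leq \rho_n(x)$, so $x \notin E_n^{k+1}$. The containment $E_n^k \subseteq E_n^0$ follows inductively.

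I do not expect a real obstacle here; the whole content of the lemma is the observation that the normalizing factor $S_n^k$ is bounded by $1$, which makes the iteration monotone by design. The only place to be slightly careful is that $\sigma_n^k(x_j)$ appears in the denominator: since we multiply by $\chi_{E_n^k}(x_j)$, the ratio is only evaluated where $\sigma_n^k(x_j) > \rho_n(x_j) \geq 0$, so no division-by-zero issue arises, and the convention $0/0 = 0$ off $E_n^k$ is harmless.
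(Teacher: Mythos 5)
Your proof is correct and takes the same approach as the paper's: the whole lemma reduces to the observation that $0 \le S_n^k \le 1$ pointwise (which holds because the quotient is only evaluated on $E_n^k$, where the numerator and denominator are positive and the numerator is smaller), from which $\sqrt{1-S_n^k} \in [0,1]$ and the three monotonicity chains follow. The paper states this in one sentence and leaves the deductions implicit; you have spelled out the induction and the implications (i) $\Rightarrow$ (ii) $\Rightarrow$ (iii), which is just a more detailed write-up of the same argument. One small remark: the key bound $S_n^k \in [0,1]$ does not in fact depend on the inductive hypothesis (it holds for any $k$ directly from the definition of $E_n^k$), so the three statements are less tightly coupled than your opening sentence suggests; the induction is only needed to propagate the chains $\varphi_n^k \le \varphi$ and $\sigma_n^k \le \rho$.
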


\begin{proof} Since $0 \leq S_n^k(X) \leq 1$, the factor $\sqrt{1 - S_n^k(X)}$ is less or equal than 1, and the inequalities in (i) are obvious; (ii) and (iii) follow.
\end{proof}

In order to estimate some $L^2$ norms which will appear later, the following lemma will also prove useful.

\begin{lemma} \label{int-E-lemma} If $k \geq 0$ and $E \subseteq E_n^k$, then
\[
 \int_{E} (\sigma_n^k(x) - \rho_n(x)) dx \leq \left( \frac{N-1}{N} \right)^k \int_{E} (\rho(x) - \rho_n(x)) dx.
\] 
\end{lemma}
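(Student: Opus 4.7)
The plan is to proceed by induction on $k$, the base case $k=0$ being immediate since $\sigma_n^0 = \rho$ gives equality. The core of the inductive step is to obtain an explicit formula for $\sigma_n^{k+1}$ on the set $E_n^k$ and compare it pointwise to $\sigma_n^k$.

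First I would compute $\sigma_n^{k+1}(x)$ by integrating $(\varphi_n^{k+1})^2 = (\varphi_n^k)^2 (1 - S_n^k)$ against $dx_2 \cdots dx_N$. Writing $f_k(y) \eqdef \frac{\sigma_n^k(y) - \rho_n(y)}{\sigma_n^k(y)} \chi_{E_n^k}(y)$, we have $S_n^k(X) = \frac{1}{N} \sum_{j=1}^N f_k(x_j)$. Splitting off the $j=1$ term and using the symmetry of $\varphi_n^k$ to identify the remaining $N-1$ terms, I would get
\[
\sigma_n^{k+1}(x) \;=\; \sigma_n^k(x) \;-\; \frac{1}{N} f_k(x)\, \sigma_n^k(x) \;-\; \frac{N-1}{N}\, g_n^k(x),
\]
where
\[
g_n^k(x) \eqdef \int (\varphi_n^k)^2(x, x_2, \ldots, x_N)\, f_k(x_2)\, dx_2 \cdots dx_N \;\geq\; 0.
\]
Note that $f_k(x)\sigma_n^k(x) = (\sigma_n^k(x) - \rho_n(x))\chi_{E_n^k}(x)$.

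Now fix $E \subseteq E_n^{k+1}$. By Lemma~\ref{monotonicity-lemma}(iii), $E_n^{k+1} \subseteq E_n^k$, so on $E$ we may drop the indicator $\chi_{E_n^k}$, giving pointwise
\[
\sigma_n^{k+1}(x) - \rho_n(x) \;=\; \frac{N-1}{N}\bigl(\sigma_n^k(x) - \rho_n(x)\bigr) \;-\; \frac{N-1}{N}\, g_n^k(x).
\]
Integrating on $E$ and discarding the nonpositive term $-\frac{N-1}{N}\int_E g_n^k$, we obtain
\[
\int_E (\sigma_n^{k+1} - \rho_n)\, dx \;\leq\; \frac{N-1}{N} \int_E (\sigma_n^k - \rho_n)\, dx.
\]
Since $E \subseteq E_n^k$ as well, the inductive hypothesis applies and yields the factor $\bigl(\frac{N-1}{N}\bigr)^k \int_E (\rho - \rho_n)\, dx$, closing the induction.

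The only nonroutine step is the symmetry-based computation producing the explicit formula for $\sigma_n^{k+1}$; once that is in hand, the sign $g_n^k \geq 0$ and the nested inclusion of the sets $E_n^k$ make the estimate automatic. One small point to verify is that $\sigma_n^k > 0$ on $E_n^k$ so that the ratio defining $f_k$ is well-defined, but this is immediate from $E_n^k = \{\sigma_n^k > \rho_n\}$ together with $\rho_n \geq 0$.
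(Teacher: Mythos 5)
Your proof is correct and follows essentially the same route as the paper's: both split the sum defining $S_n^k$ to isolate the $j=1$ contribution, which produces the factor $1 - \tfrac{1}{N}$, and discard the remaining (nonnegative) contribution. The only cosmetic difference is that you package the $j \ge 2$ terms into a pointwise remainder $g_n^k$ via symmetry and then integrate, while the paper works directly at the level of integrals over $E \times (\R^d)^{N-1}$ and simply drops the $j\ge 2$ terms from a lower bound; the underlying estimate is identical.
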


\begin{proof} By induction on $k$. For $k = 0$ the inequality is in fact an equality.

Suppose now the thesis is true for $k$, and fix $E \subseteq E_n^{k+1}$. Using the fact that $E_n^{k+1} \subseteq E_n^k$ one has

\begin{align*}
 \int_{E} (\sigma_n^{k+1}(x) - \rho_n(x)) dx &= \int_{E \times (\R^d)^{N-1}} \varphi_n^{k}(X)^2 (1 - S_n^{k}(X)) dX - \int_E \rho_n(x) dx \\
  &= \int_{E} (\sigma_n^{k}(x) - \rho_n(x)) dx - \int_{E \times (\R^d)^{N-1}} \varphi_n^k(X)^2 S_n^{k}(X) dX.
\end{align*}

Notice that
\begin{align*}
 \int_{E \times (\R^d)^{N-1}} \varphi_n^k(X)^2 S_n^{k}(X) dX &= \frac{1}{N} \sum_{j = 1}^N \int_{E \times (\R^d)^{N-1}} \varphi_n^k(X)^2 \frac{\sigma_n^k(x_j) - \rho_n(x_j)}{\sigma_n^k(x_j)} \chi_{E_n^k}(x_j) dX \\
  & \geq \frac{1}{N} \int_{E \times (\R^d)^{N-1}} \varphi_n^k(X)^2 \frac{\sigma_n^k(x_1) - \rho_n(x_1)}{\sigma_n^k(x_1)} \chi_{E_n^k}(x_1) dX \\
  &= \frac{1}{N} \int_E (\sigma_n^k(x) - \rho_n(x)) dx,
\end{align*}
because the (first) marginal of $\varphi_n^k$ is $\sigma_n^k$, and $E \subseteq E_n^k$. Hence, using the inductive hypothesis,
\begin{align*}
 \int_{E} (\sigma_n^{k+1}(x) - \rho_n(x)) dx &\leq \left( 1 - \frac{1}{N} \right) \int_{E} (\sigma_n^{k}(x) - \rho_n(x)) dx \\
  & \leq \left( 1 - \frac{1}{N} \right)^{k+1} \int_E (\rho(x) - \rho_n(x)) dx.
\end{align*}
as wanted.
\end{proof}

The following proposition specifies that the sequence $(\varphi_n^k)_{k \geq 0}$ is not too far away from the target function $\varphi$ with respect to the $L^2$ topology.

\begin{prop} \label{k-estimate} For every $k \geq 0$,
\[
 \norm{\varphi}^2_{L^2(\rnd)} - \norm{\varphi_n^k}^2_{L^2(\rnd)} \leq 2N \norm{\sqrt{\rho} - \sqrt{\rho_n}}_{L^2(\R^d)}.
\]

\end{prop}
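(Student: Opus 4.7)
The plan is to telescope the difference of squared norms along the construction $\varphi = \varphi_n^0, \varphi_n^1, \dotsc, \varphi_n^k$ and to control each one-step drop via Lemma \ref{int-E-lemma}. The key initial observation is that $\sigma_n^j$ is the first marginal of $(\varphi_n^j)^2$, so $\norm{\varphi_n^j}_{L^2(\rnd)}^2 = \int_{\R^d}\sigma_n^j\,dx$; combined with the recursion $\varphi_n^{j+1} = \varphi_n^j \sqrt{1 - S_n^j}$ this immediately gives
\[
\norm{\varphi_n^j}^2 - \norm{\varphi_n^{j+1}}^2 = \int_{\rnd} (\varphi_n^j(X))^2 \, S_n^j(X)\, dX.
\]

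The nontrivial step is evaluating this right-hand side. Because $\varphi_n^j$ is symmetric, the $N$ summands defining $S_n^j$ contribute equally; and because $(\varphi_n^j)^2$ has marginal $\sigma_n^j$, each summand simplifies (after integrating out $x_2,\dotsc,x_N$) to $\int_{E_n^j}(\sigma_n^j - \rho_n)\,dx$. Averaging over $i=1,\dotsc,N$ yields
\[
\norm{\varphi_n^j}^2 - \norm{\varphi_n^{j+1}}^2 = \int_{E_n^j}(\sigma_n^j - \rho_n)\,dx.
\]
Invoking Lemma \ref{int-E-lemma} with $E = E_n^j$, and using the inclusion $E_n^j \subseteq E_n^0 = \gra{\rho > \rho_n}$ (on which $\rho - \rho_n \geq 0$), the last quantity is bounded by
\[
\left(\frac{N-1}{N}\right)^j \int_{E_n^j}(\rho - \rho_n)\,dx \leq \left(\frac{N-1}{N}\right)^j \norm{\rho - \rho_n}_{L^1(\R^d)}.
\]

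The $L^1$ discrepancy of the densities is tamed by Cauchy--Schwarz: factoring $\abs{\rho - \rho_n} = \abs{\sqrt{\rho} - \sqrt{\rho_n}}(\sqrt{\rho} + \sqrt{\rho_n})$ and using $\norm{\sqrt{\rho}}_{L^2} = \norm{\sqrt{\rho_n}}_{L^2} = 1$ yields $\norm{\rho - \rho_n}_{L^1} \leq 2\norm{\sqrt{\rho} - \sqrt{\rho_n}}_{L^2}$. Telescoping and bounding the resulting geometric series by its infinite-sum value $\sum_{j=0}^{\infty}(1 - 1/N)^j = N$ finally produces
\[
\norm{\varphi}^2 - \norm{\varphi_n^k}^2 = \sum_{j=0}^{k-1}\bigl(\norm{\varphi_n^j}^2 - \norm{\varphi_n^{j+1}}^2\bigr) \leq 2N \norm{\sqrt{\rho} - \sqrt{\rho_n}}_{L^2(\R^d)},
\]
which is the claim. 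The only delicate point is the symmetry reduction of the cross term $\int (\varphi_n^j)^2 S_n^j\, dX$ to a one-variable integral on $E_n^j$; everything else is a telescope, a geometric series and elementary Cauchy--Schwarz.
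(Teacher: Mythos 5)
Your proposal is correct and follows essentially the same path as the paper's proof: telescope the squared $L^2$ norms, use symmetry and the marginal identity to reduce $\int (\varphi_n^j)^2 S_n^j\,dX$ to $\int_{E_n^j}(\sigma_n^j-\rho_n)\,dx$, bound that via Lemma \ref{int-E-lemma}, sum the geometric series to get the factor $N$, and control $\norm{\rho-\rho_n}_{L^1}$ by $2\norm{\sqrt{\rho}-\sqrt{\rho_n}}_{L^2}$ via Cauchy--Schwarz. The only cosmetic deviation is in the last step: the paper bounds $\norm{\sqrt{\rho}+\sqrt{\rho_n}}_{L^2}\le 2$ via the pointwise inequality $(\sqrt{a}+\sqrt{b})^2\le 2(a+b)$, while you use the triangle inequality together with $\norm{\sqrt{\rho}}_{L^2}=\norm{\sqrt{\rho_n}}_{L^2}=1$; both yield the same constant.
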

\begin{proof}
We denote for simplicity the $L^2$-norm as $\norm{\cdot}$ both on $\rnd$ and on $\R^d$, since there cannot be any confusion. By the definition of the $\varphi_n^j$'s  we may compute for every $j \geq 0$
\[
 \norm{\varphi_n^{j+1}}^2 = \int \varphi_n^{j}(X)^2 (1 - S_n^j(X)) dX = \norm{\varphi_n^j}^2 - \int_{E_n^j} (\sigma_n^j(x) - \rho_n(x)) dx.
\]
Hence, using Lemma \ref{int-E-lemma},
\begin{ieee*}{rCl}
 \norm{\varphi}^2 - \norm{\varphi_n^k}^2 & = & \sum_{j = 0}^{k-1} \left( \norm{\varphi_n^j}^2 - \norm{\varphi_n^{j+1}}^2 \right) = \sum _{j=0}^{k-1}  \int_{E_n^j} (\sigma_n^j - \rho_n) \\
  & \leq & \sum _{j=0}^{k-1} \left( \frac{N-1}{N} \right)^ j \int_{E_n^j} (\rho - \rho_n) = \sum _{j=0}^{k-1} \left( \frac{N-1}{N} \right)^ j \int_{E_n^j} |\rho - \rho_n|\\
  & \leq & \sum _{j=0}^{k-1} \left( \frac{N-1}{N} \right)^ j \int |\rho - \rho_n| \leq N \int |\rho - \rho_n |,
\end{ieee*}

Now the H\"{o}lder inequality and the elementary estimate $(\sqrt{a} + \sqrt{b})^2 \leq 2(a + b)$ lead to
\begin{align*}
 \int \abs{\rho - \rho_n} &\leq \left( \int \abs{\sqrt{\rho} + \sqrt{\rho_n}}^2 \right)^{1/2} \left( \int \abs{\sqrt{\rho} - \sqrt{\rho_n}}^2 \right)^{1/2} \\
  &=  \left( 2\int (\rho + \rho_n) \right)^{1/2} \norm{\sqrt{\rho} - \sqrt{\rho_n}} = 2 \norm{\sqrt{\rho} - \sqrt{\rho_n}}. \qedhere
\end{align*}
\end{proof}

We are ready to define the functions $\varphi_n$. Let

\begin{align*}
 \varphi_n^{\infty}(X) &= \lim_{k \to \infty} \varphi_n^k \\
 \sigma_n^{\infty}(x) &= \int \varphi_n^{\infty}(x, x_2, \dotsc, x_N)^2 dx_2 \dotsm dx_N.
\end{align*}

They are well defined due to Lemma \ref{monotonicity-lemma}, and $\varphi_n^\infty$ is symmetric, since it is the pointwise limit of symmetric functions; let moreover
\begin{align*}
 q_n &= \int \left( \rho_n(x) - \sigma_n^{\infty}(x) \right) dx \\
 \alpha_n(X) &= \frac{1}{q_n^{N-1}} \prod_{j = 1}^N (\rho_n(x_j) - \sigma_n^{\infty}(x_j)) \\
 \varphi_n(X) &= \sqrt{\varphi_n^{\infty}(X)^2 + \alpha_n(X)}.
\end{align*}
where the second term is set to zero if $q_n = 0$. Observe that the function $\varphi_n$ is symmetric, because $\alpha_n$ is symmetric by construction. The definition is well-posed since $\alpha_n$ is non-negative, as proved in the following

\begin{lemma} \label{sigma-lemma} $\rho_n(x) - \sigma_n^{\infty}(x) \geq 0$.
\end{lemma}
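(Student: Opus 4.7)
The plan is to prove a pointwise contraction: for every $k \geq 0$ and a.e. $x \in \R^d$,
\[
\max\gra{0, \sigma_n^{k+1}(x) - \rho_n(x)} \leq \left(1 - \tfrac{1}{N}\right) \max\gra{0, \sigma_n^k(x) - \rho_n(x)},
\]
and then to pass to the limit $k \to \infty$. This is essentially the pointwise strengthening of the $L^1$ estimate already obtained in Lemma \ref{int-E-lemma}.

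First I would establish the pointwise convergence $\sigma_n^k(x) \to \sigma_n^\infty(x)$ for a.e. $x$. By Lemma \ref{monotonicity-lemma} the sequence $\varphi_n^k(X)^2$ is decreasing in $k$ and dominated by $\varphi(X)^2$. By Fubini, for a.e. $x$ the slice $\varphi(x,\cdot)^2$ belongs to $L^1((\R^d)^{N-1})$, so dominated convergence yields the claim.

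For the contraction, I would split on whether $x \in E_n^k$. If $x \notin E_n^k$, then $\sigma_n^k(x) \leq \rho_n(x)$, and Lemma \ref{monotonicity-lemma} gives $\sigma_n^{k+1}(x) \leq \sigma_n^k(x) \leq \rho_n(x)$, so the right-hand side of the contraction is already zero. If $x \in E_n^k$, I would compute, exactly as in the proof of Lemma \ref{int-E-lemma},
\[
\sigma_n^{k+1}(x) = \sigma_n^k(x) - \int \varphi_n^k(x,x_2,\dotsc,x_N)^2\, S_n^k(x,x_2,\dotsc,x_N)\, dx_2 \dotsm dx_N,
\]
and keep only the $j = 1$ term of $S_n^k$; all other terms are non-negative. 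Since the first marginal of $\varphi_n^k(x,\cdot)^2$ is $\sigma_n^k(x)$ and the factor $\chi_{E_n^k}(x)$ equals $1$, this retained term integrates to exactly $\tfrac{1}{N}(\sigma_n^k(x) - \rho_n(x))$. Hence
\[
\sigma_n^{k+1}(x) - \rho_n(x) \leq \left(1 - \tfrac{1}{N}\right) \left(\sigma_n^k(x) - \rho_n(x)\right),
\]
which is the desired contraction.

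Iterating from $\sigma_n^0 = \rho$ gives $\max\gra{0, \sigma_n^k(x) - \rho_n(x)} \leq (1 - 1/N)^k \max\gra{0, \rho(x) - \rho_n(x)}$ for a.e. $x$; since $\rho(x)$ and $\rho_n(x)$ are finite a.e., the right-hand side tends to $0$. Combined with $\sigma_n^k(x) \to \sigma_n^\infty(x)$, this yields $\sigma_n^\infty(x) \leq \rho_n(x)$ a.e. I do not anticipate a serious obstacle: the core of the argument is already contained in the proof of Lemma \ref{int-E-lemma}, and the only additional ingredient is the routine exchange of limit and integration provided by dominated convergence.
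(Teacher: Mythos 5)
Your proof is correct and uses essentially the same key computation as the paper: keeping only the $j=1$ term of $S_n^k$, using that the first marginal of $(\varphi_n^k)^2$ is $\sigma_n^k$, and splitting on whether $x\in E_n^k$. The only difference is how the conclusion is drawn: you iterate the contraction to get geometric decay $\max\{0,\sigma_n^k(x)-\rho_n(x)\}\le(1-1/N)^k\max\{0,\rho(x)-\rho_n(x)\}\to 0$, whereas the paper passes to the limit in $k$ directly in the one-step inequality $\rho_n-\sigma_n^{k+1}\ge\frac{N-1}{N}(\rho_n-\sigma_n^k)$ to obtain the fixed-point relation $\rho_n-\sigma_n^\infty\ge\frac{N-1}{N}(\rho_n-\sigma_n^\infty)$, from which non-negativity is immediate; both versions implicitly rely on the a.e.\ pointwise convergence $\sigma_n^k\to\sigma_n^\infty$, which you make explicit.
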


\begin{proof} Using that $\frac{\sigma_n^k - \rho_n}{\sigma_n^{k}} \chi_{E_n^{k}} \geq 0$, one has
\begin{align*}
 \rho_n(x) - \sigma_n^{k+1}(x) &= \rho_n(x) - \int_{(\R^d)^{N-1}} \varphi_n^{k+1}(X)^2 dx_2 \dotsm dx_N \\
  &= \rho_n(x) - \sigma_n^k(x) + \int_{\R^{(N-1)d}} \varphi_n^k(X)^2 S_n^k(X) dx_2 \dotsm dx_N \\
  &\geq (\rho_n(x) - \sigma_n^k(x)) \left( 1 - \frac{1}{N} \chi_{E_n^{k}}(x) \right)
\end{align*}

If $x \in E_n^k$ then
\[
 (\rho_n(x) - \sigma_n^k(x)) \left( 1 - \frac{1}{N} \chi_{E_n^k}(x) \right) = \frac{N-1}{N} (\rho_n(x) - \sigma_n^k(x));
\]
on the other hand, if $x \in (E_n^k)^c$, then $\rho_n(x) - \sigma_n^k(x) \geq 0$, and hence
\[
 (\rho_n(x) - \sigma_n^k(x)) \left( 1 - \frac{1}{N} \chi_{E_n^k}(x) \right) = \rho_n(x) - \sigma_n^k(x) \geq \frac{N-1}{N} (\rho_n(x) - \sigma_n^k(x)).
\]
So that for every $x \in \R^d$,
\[
 \rho_n(x) - \sigma_n^{k+1}(x) \geq \frac{N-1}{N} (\rho_n(x) - \sigma_n^k(x)),
\]
and letting $k \to \infty$,
\[
 \rho_n(x) - \sigma_n^{\infty}(x) \geq \frac{N-1}{N} (\rho_n(x) - \sigma_n^{\infty}(x)) \implies \rho_n(x) - \sigma_n^{\infty}(x) \geq 0. \qedhere
\]
\end{proof}

Finally, $\varphi_n \to \varphi$ in $L^2$ as $n$ goes to $\infty$, as is proved in the following

\begin{prop}
 \[
  \norm{\varphi_n - \varphi}^2_{L^2} \leq 2(2N+1) \norm{\sqrt{\rho} - \sqrt{\rho_n}}_{L^2}
 \]
\end{prop}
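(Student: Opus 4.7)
The plan is to bound $\norm{\varphi_n - \varphi}^2_{L^2}$ in terms of the mass deficit $q_n$ via a pointwise decomposition, and then control $q_n$ using Proposition \ref{k-estimate} passed to the limit $k \to \infty$.

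The key observation is that $\varphi_n^\infty$ lies pointwise below \emph{both} $\varphi_n$ and $\varphi$: below $\varphi_n = \sqrt{(\varphi_n^\infty)^2 + \alpha_n}$ by the nonnegativity of $\alpha_n$ (Lemma \ref{sigma-lemma}), and below $\varphi$ by Lemma \ref{monotonicity-lemma}(i). This two-sided bound unlocks the elementary pointwise inequality
\[
  (a - c)^2 \leq (a - b)^2 + (b - c)^2 \qquad \text{whenever } b \leq \min(a,c),
\]
whose remainder after expansion is the nonnegative term $2(a-b)(c-b)$. Applied with $a = \varphi_n$, $b = \varphi_n^\infty$, $c = \varphi$, and then refined on each summand via the bound $(a-b)^2 \leq a^2 - b^2$ valid for $0 \leq b \leq a$, this gives pointwise
\[
  (\varphi_n - \varphi)^2 \leq \alpha_n + \bigl( \varphi^2 - (\varphi_n^\infty)^2 \bigr).
\]

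Integrating and using $\int \alpha_n\,dX = q_n$ (Fubini on the product defining $\alpha_n$) yields
\[
  \norm{\varphi_n - \varphi}^2_{L^2} \leq q_n + \bigl( \norm{\varphi}^2_{L^2} - \norm{\varphi_n^\infty}^2_{L^2} \bigr).
\]
The second term is at most $2N \norm{\sqrt{\rho}-\sqrt{\rho_n}}_{L^2}$ by Proposition \ref{k-estimate} passed to the limit $k \to \infty$ (its right-hand side does not depend on $k$, and $\norm{\varphi_n^k}_{L^2}^2 \to \norm{\varphi_n^\infty}_{L^2}^2$ by monotone convergence). For $q_n$ itself I would split
\[
  q_n = \int (\rho_n - \sigma_n^\infty)\,dx \leq \int \abs{\rho - \rho_n}\,dx + \int (\rho - \sigma_n^\infty)\,dx,
\]
bounding the first piece by $2\norm{\sqrt{\rho}-\sqrt{\rho_n}}_{L^2}$ via the H\"older-type estimate already used in the proof of Proposition \ref{k-estimate}, and the second again by $2N\norm{\sqrt{\rho}-\sqrt{\rho_n}}_{L^2}$. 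Adding the three contributions yields exactly $2(2N+1)\norm{\sqrt{\rho}-\sqrt{\rho_n}}_{L^2}$.

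There is no real technical obstacle here. The design choice that matters is the decomposition: a bare triangle inequality $\norm{\varphi_n - \varphi} \leq \norm{\varphi_n - \varphi_n^\infty} + \norm{\varphi_n^\infty - \varphi}$ followed by $(a+b)^2 \leq 2a^2 + 2b^2$ would cost a spurious factor of $2$, whereas the pointwise identity above exploits the opposite signs of $\varphi_n - \varphi_n^\infty \geq 0$ and $\varphi_n^\infty - \varphi \leq 0$ to kill the cross term, which is precisely what makes the estimate work.
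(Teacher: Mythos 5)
Your proof is correct and follows essentially the same route as the paper: both reduce to the pointwise bound $(\varphi_n - \varphi)^2 \leq \alpha_n + \varphi^2 - (\varphi_n^\infty)^2$, integrate it, and then estimate $\|\alpha_n\|_{L^1} = q_n$ and $\|\varphi\|^2 - \|\varphi_n^\infty\|^2$ separately via Proposition \ref{k-estimate} and the H\"older-type estimate. The only difference is cosmetic: the paper reaches the pointwise inequality by expanding $(\varphi_n - \varphi)^2$ directly and applying $\sqrt{(\varphi_n^\infty)^2+\alpha_n}\geq\varphi_n^\infty$ and $\varphi \geq \varphi_n^\infty$ in the cross term, whereas you split at $\varphi_n^\infty$ using $(a-c)^2 \leq (a-b)^2+(b-c)^2$ for $b \leq \min(a,c)$ and then refine each piece with $(a-b)^2 \leq a^2-b^2$; these are two equally elementary derivations of the same estimate, and both exploit the two-sided ordering $\varphi_n^\infty \leq \min(\varphi_n,\varphi)$ in the same essential way.
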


\begin{proof}
By the monotonicity described in Lemma \ref{monotonicity-lemma}, $\varphi \geq \varphi_n^\infty$ and then 
 \begin{align*}
  \abs{\varphi_n - \varphi}^2 &= \varphi^2 + \left( \varphi_n^\infty \right)^2 + \alpha_n - 2 \varphi \sqrt{\left( \varphi_n^\infty \right) ^2 + \alpha_n} \\
    &\leq \varphi^2 + \left( \varphi_n^\infty \right)^2 + \alpha_n - 2 \varphi \varphi_n^\infty \leq  \varphi^2 - \left( \varphi_n^\infty \right)^2 + \alpha_n
 \end{align*}

 Integrating over $\R^d$ leads to
 \[
  \norm{\varphi_n - \varphi}^2_{L^2} \leq \norm{\varphi}^2_{L^2} - \norm{\varphi_n^\infty}^2_{L^2} + \norm{\alpha_n}_{L^1}.
 \]
 
 Letting $k \to \infty$ in Proposition \ref{k-estimate} and using the monotone convergence theorem, one has
 \[
  \norm{\varphi}^2_{L^2} - \norm{\varphi_n^\infty}^2_{L^2} \leq 2N \norm{\sqrt{\rho} - \sqrt{\rho_n}}_{L^2}.
 \]
 
 On the other hand, recalling the final step of the proof of Proposition \ref{k-estimate} and using again the monotone convergence theorem,
 \begin{align*}
  \norm{\alpha_n}_{L^1} &= \int \left( \rho_n(x) - \sigma_n^\infty(x) \right) dx \\
    &\leq \int \abs{\rho_n(x) - \rho(x)} dx + \int \left( \rho(x) - \sigma_n^\infty(x) \right) dx \\
    &\leq 2 \norm{\sqrt{\rho} - \sqrt{\rho_n}}_{L^2} + \int \varphi(X)^2 dX - \int \varphi_n^\infty(X)^2 dX \\
    &= 2 \norm{\sqrt{\rho} - \sqrt{\rho_n}}_{L^2} + \norm{\varphi}^2_{L^2} - \norm{\varphi_n^\infty}^2_{L^2} \\
    &\leq 2(N+1) \norm{\sqrt{\rho} - \sqrt{\rho_n}}_{L^2}. \qedhere
 \end{align*}
\end{proof}

This concludes the proof of Theorem \ref{main-thm}.

%%%%%%%%%%%%%%%%%%%%%%%%%%%%%%%%%%%%%%%%%%%%%%%%%
%%%%%%%%%%%%%%%%%%%%%%%%%%%%%%%%%%%%%%%%%%%%%%%%%
%%%%%%%%%%%%%%%%%%%%%%%%%%%%%%%%%%%%%%%%%%%%%%%%%
\section{Sobolev regularity operators} \label{H1-regularity}

%First we observe that the sequence $(\rho_n)$ is tight. 
%\begin{lemma}[Tightness] \label{R-lemma}
% For every $\gamma \in (0,1)$ there exists $R > 0$ such that for every $n \in \N$
% \[
%  \int_{B(0,R)} \rho_n(x) dx \geq \gamma \text{ and } \int_{B(0,R)} \rho(x) dx \geq \gamma.
% \]
%\end{lemma}
%
%\begin{proof}
% Fix $\gamma \in (0,1)$ and let $\delta = 1 - \sqrt{\gamma} > 0$. Since $\rho \in L^1(\R^d)$ and $\int \rho = 1$, we may find $r > 0$ such that
% \[
%  \int_{B(0,r)} \rho(x) dx \geq 1 - \delta^2.
% \]
% Note that
% \begin{align*}
%  \int_{B(0,r)} \sqrt{\rho_n(x)\rho(x)} dx &\leq \sqrt{\int_{B(0,r)} \rho(x) dx} \sqrt{\int_{B(0,r)} \rho_n(x) dx}\\
%    &\leq \sqrt{\int_{B(0,r)} \rho_n(x) dx} 
% \end{align*}
% and
% \begin{align*}
%  \int_{B(0,r)^c} \sqrt{\rho_n(x)\rho(x)} dx &\leq \sqrt{\int_{B(0,r)^c} \rho(x) dx} \sqrt{\int_{B(0,r)^c} \rho_n(x) dx}\\
%    &\leq \sqrt{\int_{B(0,r)^c} \rho(x) dx} \leq \delta. 
% \end{align*}
% 
% Fix $N \in \N$ such that $\int \abs{\sqrt{\rho_n} - \sqrt{\rho}}^2 \leq \delta$ for every $n \geq N$. Then, if $n \geq N$,
% \begin{align*}
%  \delta &\geq \int \rho_n(x) dx + \int \rho(x) dx - 2 \int \sqrt{\rho_x(x)\rho(x)} dx\\
%    &= 2 - 2 \int_{B(0,r)} \sqrt{\rho_n(x)\rho(x)} dx - 2\int_{B(0,r)^c} \sqrt{\rho_n(x)\rho(x)} dx \\
%    &\geq 2 - \delta - 2\sqrt{\int_{B(0,r)} \rho_n(x) dx},
% \end{align*}
% thus 
% \[
%  \int_{B(0,r)} \rho_n(x) dx \geq (1 - \delta)^2 = \gamma.
% \]
% 
% Now for every $n = 0, 1, \dotsc, N-1$ choose $r_n > 0$ such that
% \[
%  \int_{B(0,r_n)} \rho_n \geq \gamma,
% \]
% and finally let
% \[
%  R = \max \gra{r_0,r_1,\dotsc,r_{N-1},r}.
% \]
%\end{proof}

Let $\rho_n, \rho, \varphi_n$ and $\varphi$ as in Section \ref{L2-sequence} and assume, additionally, that $\varphi \in H^1$ and $\sqrt{\rho_n} \to \sqrt{\rho}$ in $H^1(\R^d)$. The sequence $(\varphi_n)$ constructed in Section \ref{L2-sequence} is such that $\varphi_n \in L^2(\rnd)$ with $\varphi_n \to \varphi$ in $L^2(\rnd)$.
We will now improve the regularity and the convergence of $(\varphi_n)$ at least up to $H^1(\rnd)$. In order to do so, we introduce the Gaussian mollifiers
\begin{align}
 \eta^\eps(z) &= \frac{1}{(2\pi\eps)^{d/2}} \exp \left( -\frac{\abs{z}^2}{2\eps} \right), \quad z \in \R^d \nonumber \\
 \phi^\eps(Z) &= \prod_{j = 1}^N \eta_\eps(z_j), \quad Z = (z_1, \dotsc, z_N) \in \rnd. \label{gaussian-kernel}
\end{align}

Given any function $u \in L^2(\rnd)$, first we regularize the measure $\abs{u}^2 dX$ by convolution
\[\Lambda^\eps[u](Y) = \int \abs{u(X)}^2 \phi^\eps(Y-X) dX,\]
and consider the marginal of this regularization
\[ \rho^\eps[u](y) = \int \Lambda^\eps[u](y,y_2,\dotsc,y_N) dy_2 \dotsm dy_N. \]
Then using the kernels 
\[ P^\eps[u](X,Y) = \prod_{j = 1}^N \frac{\eta^\eps(y_j-x_j) \rho[u](x_j)}{\rho^\eps[u](y_j)}, \]
\[ \Gamma^\eps[u](X,Y) = \Lambda^\eps[u](Y) P^\eps[u](X,Y) \]
we define
\[ \Theta^\eps[u](X) = \int \Gamma^\eps[u](X,Y) dY.\]

Let $u^\eps:= ( \Theta^\eps[u])^{1/2}$. This construction, in a more general setting, has been presented and studied in detail in \cite{bindini2019smoothing}, where it is also proved that $u^\eps \in H^1(\rnd)$ for every $\eps > 0$.

\begin{rema}
	The operator  $\sqrt{\ } \circ \Theta^\eps$ is builded out of four passages. We start from $u$, we consider the measure $|u|^2 dX$, we regularize this last measure and, since the regularization process perturbs the marginals, we bring the marginals back to the originals. Finally we take the square root. 
	
	The construction to bring back the marginals makes use of the idea of composition of transport plans introduced in \cite{ambrosio2008gradient}
\end{rema}

If $u$ is a symmetric function, \ie, $u(x_1, \dotsc, x_N) = u(x_{\sigma(1)}, \dotsc, x_{\sigma(N)})$ for every permutation $\sigma \in \mathfrak{S}_N$, the construction defined above gives a symmetric function $u^\eps$. Moreover, it ``reallocates'' the marginals as described in the following Lemma:

\begin{lemma}[Marginal properties] \label{basic-properties} For every $\eps > 0$ the following hold.
\begin{enumerate}[(i)]
 \item $\rho^\eps[u] = \rho[u] * \eta^\eps;$
 \item $\rho[u^\eps] = \rho[u]$.
% \item {\bf NON E' INCLUSO NEL PEZZO CHE CITIAMO QUESTO PUNTO?} there exists a constant $C$, depending only on $N,d$, such that for every $r, \eps > 0$ and for every $u \in L^2$
% \[
%  \int_{\gra{\abs{X-Y} \geq r}} \Gamma^\eps[u](X,Y) dX dY \leq C \norm{u} \exp \left( -\frac{r^2}{4N\eps} \right).
% \]
\end{enumerate}
\end{lemma}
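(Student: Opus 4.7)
My plan is to prove both identities by direct computation, exploiting the product structure $\phi^\eps(Z) = \prod_{j=1}^N \eta^\eps(z_j)$, Fubini's theorem, and the fact that $\int \eta^\eps = 1$. No conceptual idea is required beyond this; the entire content is careful bookkeeping of which variable is being integrated out at each step.

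For (i), I would unwind $\rho^\eps[u](y) = \int \Lambda^\eps[u](y, y_2, \dotsc, y_N) dy_2 \dotsm dy_N$ and substitute the definition of $\Lambda^\eps$. Fubini combined with the product structure of $\phi^\eps$ turns this into
\[\rho^\eps[u](y) = \int \abs{u(X)}^2 \eta^\eps(y - x_1) \prod_{j=2}^N \left(\int \eta^\eps(y_j - x_j) dy_j \right) dX,\]
and each inner factor equals $1$. Integrating the resulting expression in $x_2, \dotsc, x_N$ recovers $\rho[u](x_1)$ in place of $\abs{u(X)}^2$, producing the convolution $(\rho[u] * \eta^\eps)(y)$.

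For (ii), I would compute $\rho[u^\eps](x) = \int \Theta^\eps[u](x, x_2, \dotsc, x_N) dx_2 \dotsm dx_N$ and unwind $\Theta^\eps[u] = \int \Lambda^\eps[u](Y) P^\eps[u](X,Y) dY$. The crucial structural observation is that $P^\eps$ factors as a product over $j$, so each factor can be integrated in $x_j$ independently; for $j \geq 2$ the $x_j$-integral of the $j$-th factor equals $(\rho[u] * \eta^\eps)(y_j)/\rho^\eps[u](y_j) = 1$ by part (i). What survives is
\[\rho[u^\eps](x) = \rho[u](x) \int \Lambda^\eps[u](Y) \frac{\eta^\eps(y_1 - x)}{\rho^\eps[u](y_1)} dY.\]
Carrying out the $y_2, \dotsc, y_N$ integrals next reproduces the first marginal $\rho^\eps[u](y_1)$ of $\Lambda^\eps[u]$, cancelling the denominator, and the remaining $dy_1$ integral is just $\int \eta^\eps(y_1 - x) dy_1 = 1$.

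The only mild subtlety will be that $\rho[u]$ and $\rho^\eps[u]$ are defined via integration over the first slot alone, whereas the product structure of $P^\eps$ forces us to apply the same identity in every slot $j$. The symmetry of $u$, and hence of $\abs{u}^2$ and of $\Lambda^\eps[u]$, ensures that the $j$-th marginal of $\Lambda^\eps[u]$ coincides with the first for every $j$, so the calculation above goes through uniformly in the index. Apart from this symmetric reindexing, the argument is essentially mechanical, and I do not foresee any serious obstacle.
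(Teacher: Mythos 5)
Your proposal is correct and proceeds exactly as the paper does: direct Fubini computations exploiting the product structure of $\phi^\eps$ and $P^\eps$, the normalization $\int \eta^\eps = 1$, and part (i) to make each $x_j$-factor of $P^\eps$ integrate to one. The closing remark about symmetry is harmless but not actually needed: each factor of $P^\eps$ involves only the fixed first-slot marginals $\rho[u]$ and $\rho^\eps[u]$ evaluated at $(x_j, y_j)$, so part (i) applies to each factor directly with no reindexing of marginals required.
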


\begin{proof} 
  \begin{enumerate}[(i)]
    \item By the Fubini's theorem,
    \begin{align*}
     \rho^\eps[u](y) &= \int \Lambda^\eps[u](y,y_2,\dotsc,y_N) dy_2 \dotsm dy_N \\
      &= \int \abs{u(X)}^2 \eta^\eps(y-x_1) \prod_{j = 2}^N \eta^\eps(y_j-x_j) dX dy_2 \dotsm dy_N \\
      &= \int \rho[u](x_1) \eta^\eps(y-x_1) \prod_{j = 2}^N \eta^\eps(y_j-x_j) dx_1 dy_2 \dotsm dy_N \\
      &= \int \rho[u](x_1) \eta^\eps(y-x_1) dx_1 = (\rho[u] * \eta^\eps)(y).
    \end{align*}
    
    \item Using (i), from the definition of $\Theta^\eps$ and the Fubini's theorem we get
      \begin{align*}
     \int_{\R^{(N-1)d}} \abs{u^\eps}^2 dx_2 \dotsm dx_N &= \int_{\R^{(N-1)d}} \Theta^\eps[u](X) dx_2 \dotsm dx_N \\
      &= \int_{\R^{(N-1)d} \times \rnd} \Gamma^\eps[u](X,Y) dY dx_2 \dotsm dx_N \\
      &= \int_{\rnd} \Lambda^\eps[u](Y) \frac{\eta^\eps(y_1-x_1)\rho[u](x_1)}{\rho^\eps[u](y_1)} dY \\
      &= \int_{\R^d} \rho^\eps[u](y_1) \frac{\eta^\eps(y_1-x_1)\rho[u](x_1)}{\rho^\eps[u](y_1)} dy_1 \\
      &= \rho[u](x_1).
    \end{align*}	
  \end{enumerate}
 
\end{proof}

Moreover, the following convergence properties are enjoied by the operator $\Theta^\eps$ and are studied and proved, in a more general setting, in \cite{bindini2019smoothing}

\begin{thm}[{\cite[Theorem 5.1]{bindini2019smoothing}}] \label{H1-H1-continuity} Let $u_n \to u$ in $L^2(\rnd)$ and $\sqrt{\rho[u_n]} \to \sqrt{\rho[u]}$ in $H^1(\R^d)$. Then, for every $\eps > 0$, $u_n^\eps \to u^\eps$ in $H^1(\rnd)$.
\end{thm}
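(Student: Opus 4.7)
The strategy splits naturally into two pieces: first showing $u_n^\eps\to u^\eps$ in $L^2(\rnd)$, and then upgrading to convergence of gradients. Since $\eps>0$ is fixed throughout, every Gaussian factor entering $\Lambda^\eps$, $\rho^\eps$ and $P^\eps$ is smooth with rapidly decaying derivatives, so the whole argument is essentially an extended dominated convergence once one tracks how each ingredient in $\Theta^\eps$ depends continuously on $u$.

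For the $L^2$ step I would use three elementary observations. First, $\sqrt{\rho[u_n]}\to\sqrt{\rho[u]}$ in $L^2(\R^d)$ implies $\rho[u_n]\to\rho[u]$ in $L^1(\R^d)$, and hence by Young's inequality $\rho^\eps[u_n]=\rho[u_n]*\eta^\eps\to\rho^\eps[u]$ uniformly. Second, $u_n\to u$ in $L^2(\rnd)$ gives $|u_n|^2\to|u|^2$ in $L^1$, so $\Lambda^\eps[u_n]\to\Lambda^\eps[u]$ pointwise and uniformly on compacta, with a common Gaussian envelope as dominating function. Third, $\rho^\eps[u]$ is the convolution of a unit-mass nonnegative function with a Gaussian, hence strictly positive everywhere, so the kernel $P^\eps[u_n]$ is well defined and converges pointwise to $P^\eps[u]$. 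Putting these together, $\Theta^\eps[u_n]\to\Theta^\eps[u]$ pointwise, and a Gaussian upper bound combined with $\int|u_n|^2\to\int|u|^2$ supplies an integrable dominating function. This yields $L^1$ convergence of $\Theta^\eps[u_n]$ and hence $L^2$ convergence of $u_n^\eps=\sqrt{\Theta^\eps[u_n]}$.

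For the gradient step I would differentiate $\Theta^\eps$ under the integral sign; this is legitimate because of the Gaussian smoothness. The result involves $\nabla\eta^\eps$, $\rho[u_n]$, $\nabla\rho[u_n]=2\sqrt{\rho[u_n]}\nabla\sqrt{\rho[u_n]}$, $\rho^\eps[u_n]$, $\nabla\rho^\eps[u_n]=\rho[u_n]*\nabla\eta^\eps$, and $\Lambda^\eps[u_n]$. Under our hypotheses each of these converges: the assumption $\sqrt{\rho[u_n]}\to\sqrt{\rho[u]}$ in $H^1$ is exactly what delivers $\nabla\rho[u_n]\to\nabla\rho[u]$ in $L^1$, while convolution against the smooth kernel $\eta^\eps$ (or $\nabla\eta^\eps$) turns $L^1$ convergence into uniform convergence. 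Dominated convergence then yields pointwise convergence of $\nabla u_n^\eps=\nabla\Theta^\eps[u_n]/(2u_n^\eps)$ to $\nabla u^\eps$, and the same explicit formula provides a uniform $L^2$ bound on $\nabla u_n^\eps$ that, combined with pointwise convergence and equi-integrability, upgrades to strong convergence in $L^2$.

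The principal obstacle I foresee is the denominator $\rho^\eps[u]$, which is positive pointwise but decays at infinity, so $1/\rho^\eps[u]$ lies in no $L^p$. One cannot then control $P^\eps[u_n]$ in isolation; the remedy, foreshadowed by the cancellation used in Lemma \ref{basic-properties}(ii), is to keep $P^\eps$ always paired with $\Lambda^\eps$ inside $\Theta^\eps$ so that the dangerous factor cancels against a matching Gaussian term produced by $\Lambda^\eps$. Once this pairing is preserved throughout every dominated convergence estimate, the remaining bounds become global and the scheme above goes through.
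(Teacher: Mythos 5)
The paper does not contain a proof of this theorem: it is imported verbatim as Theorem 5.1 of \cite{bindini2019smoothing}, so there is no internal argument for me to compare your proposal against. Judged on its own terms, your high-level outline (pointwise convergence of each ingredient of $\Theta^\eps$, then upgrade to $L^2$ and $H^1$) is plausible, but both upgrades as you describe them contain genuine gaps.

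For the $L^2$ step you assert that ``a Gaussian upper bound combined with $\int|u_n|^2\to\int|u|^2$ supplies an integrable dominating function'' for $\Theta^\eps[u_n]$. This is not true as stated: writing
$\Theta^\eps[u_n](X) = \prod_j \rho[u_n](x_j)\int |u_n(Z)|^2\prod_j K_n^\eps(x_j,z_j)\,dZ$ with $K_n^\eps(x,z)=\int \eta^\eps(y-x)\eta^\eps(y-z)/\rho^\eps[u_n](y)\,dy$, the factor $1/\rho^\eps[u_n]$ has no uniform lower bound as $n$ varies, and the densities $\rho[u_n]$ can shift in space, so there is no $n$-independent envelope. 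What actually rescues the $L^1$ convergence of $\Theta^\eps[u_n]$ is the exact mass identity from Lemma \ref{basic-properties}: $\int\Theta^\eps[u_n]=\int\rho[u_n]=1$ for every $n$, so once pointwise a.e.\ convergence is established one should invoke Scheff\'e's lemma rather than dominated convergence. You do not mention this, and without it the $L^2$ argument is incomplete.

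The gradient step has a more serious gap. You close it with ``uniform $L^2$ bound on $\nabla u_n^\eps$ ... combined with pointwise convergence and equi-integrability upgrades to strong convergence,'' but a uniform $L^2$ bound plus pointwise convergence gives only weak convergence; to upgrade one needs either norm convergence of the gradients (Radon--Riesz) or genuine uniform $2$-integrability together with tightness, and establishing any of these is precisely the heart of the proof, not a footnote. Moreover, the cancellation you rightly identify (keeping $P^\eps$ paired with $\Lambda^\eps$ so $\rho^\eps$ cancels) does not address the new denominator that appears in $\nabla u_n^\eps = \nabla\Theta^\eps[u_n]/(2\sqrt{\Theta^\eps[u_n]})$: after differentiating, one of the terms is $\tfrac12\,\bigl(\nabla\rho[u_n]/\rho[u_n]\bigr)\sqrt{\Theta^\eps[u_n]}$, whose $L^2$ norm is $\|\nabla\sqrt{\rho[u_n]}\|_{L^2}$; controlling the convergence of this term requires working with $\nabla\sqrt{\rho[u_n]}$ rather than $\nabla\rho[u_n]$ and handling the set where $\rho[u_n]$ is small. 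Your proposal does not see this denominator at all, so the scheme as written would not go through.
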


\begin{thm}[{\cite[Theorem 7.6]{bindini2019smoothing}}] \label{eps-continuity} Let $u \in H^1(\rnd)$ non-negative. Then $u^\eps \to u$ in $H^1(\rnd)$ as $\eps \to 0$.
\end{thm}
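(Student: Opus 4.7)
The plan is to leverage two structural features: exact $L^2$-norm preservation from Lemma~\ref{basic-properties}(ii), and the fact that the marginal-correction ratio $\rho[u](x_j)/\rho^\eps[u](y_j)$ tends to $1$ as $\eps \to 0$, so that $\Theta^\eps[u]$ should limit to $|u|^2$ pointwise. From there, strong $L^2$ convergence will follow from norm preservation and the Brezis--Lieb lemma, and $H^1$ convergence from a uniform gradient bound combined with explicit cancellations in the derivative formula.

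For the $L^2$ step, Lemma~\ref{basic-properties}(ii) already yields $\norm{u^\eps}_{L^2(\rnd)} = \norm{u}_{L^2(\rnd)}$ for every $\eps$. Rewriting
\[
\Theta^\eps[u](X) = \prod_{j=1}^N \rho[u](x_j)\cdot(F^\eps * \phi^\eps)(X), \qquad F^\eps(Y) = \frac{\Lambda^\eps[u](Y)}{\prod_j \rho^\eps[u](y_j)},
\]
and using $\Lambda^\eps[u] \to |u|^2$ and $\rho^\eps[u] \to \rho[u]$ pointwise a.e.\ (standard mollifier convergence), one obtains $\Theta^\eps[u](X) \to |u(X)|^2$ pointwise a.e.\ on the set where $\prod_j \rho[u](x_j) > 0$ (off this set the formula for $u^\eps$ degenerates, but the measure can be shown to be negligible). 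Combined with norm preservation, Brezis--Lieb yields $u^\eps \to u$ in $L^2(\rnd)$.

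For the $H^1$ step, I would differentiate $u^\eps = \sqrt{\Theta^\eps[u]}$ via the chain rule and integrate by parts in $y_j$ to obtain
\[
\nabla_{x_j}\Theta^\eps[u] = \frac{\nabla\rho[u](x_j)}{\rho[u](x_j)}\,\Theta^\eps[u] + \prod_k \rho[u](x_k)\cdot\bigl((\nabla_{y_j} F^\eps)*\phi^\eps\bigr)(X),
\]
where
\[
\nabla_{y_j} F^\eps = \Bigl(\prod_k \rho^\eps[u](y_k)\Bigr)^{-1}\Bigl(\nabla_{y_j}\Lambda^\eps[u] - \Lambda^\eps[u]\,\nabla\rho^\eps[u](y_j)/\rho^\eps[u](y_j)\Bigr).
\]
Using the Fisher-type inequality $|\nabla\rho|\leq 2\sqrt{\rho}\,|\nabla\sqrt{\rho}|$ applied to $\rho[u]$, $\rho^\eps[u]$ and $\Lambda^\eps[u]$, each term in the formula for $\nabla u^\eps$ can be bounded in $L^2(\rnd)$ by $C\norm{u}_{H^1(\rnd)}$ uniformly in $\eps$. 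A pointwise computation in the limit $\eps \to 0$ also shows that the $\nabla\rho[u]/\rho[u]$ contribution in the first summand cancels with the $-\nabla\rho^\eps[u]/\rho^\eps[u]$ contribution from $\nabla_{y_j} F^\eps$, leaving $\nabla_{x_j}\Theta^\eps[u] \to 2u\,\nabla_{x_j}u$, hence $\nabla u^\eps \to \nabla u$ pointwise a.e. The uniform $L^2$ bound then allows Brezis--Lieb, applied this time to $|\nabla u^\eps|^2$, to promote pointwise convergence to strong $L^2$ convergence of the gradients.

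The main obstacle is controlling the denominators $\rho^\eps[u](y_j)$ and $u^\eps(X)$ uniformly in $\eps$ at points where $\rho[u]$ (respectively $u$) vanishes: the derivative formula is finite only because of compensating zeros in the numerators, and extracting that compensation quantitatively — precisely what the inequality $|\nabla\rho|\leq 2\sqrt{\rho}\,|\nabla\sqrt{\rho}|$ and the tailored transport-plan smoothing estimates of \cite{bindini2019smoothing} are designed for — is where the real technical work sits.
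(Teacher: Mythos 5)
Note first that the paper does not contain a proof of this theorem at all: it is cited verbatim as \cite[Theorem~7.6]{bindini2019smoothing} and used as a black box in Theorem~\ref{sym-H1}. So there is no in-paper argument to compare against, and your proposal has to be judged on its own merits.

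Your $L^2$ step is sound in outline. Lemma~\ref{basic-properties}(ii) does give $\norm{u^\eps}_{L^2}=\norm{u}_{L^2}$, and pointwise a.e.\ convergence of $\Theta^\eps[u]$ to $u^2$ plus exact norm preservation would yield $L^2$ convergence by Brezis--Lieb (or, since $u\geq 0$, simply by Scheff\'e/Riesz). The pointwise convergence itself is more delicate than ``standard mollifier convergence'' suggests, since both $F^\eps$ and the mollifier scale vary simultaneously, $F=u^2/\prod_j\rho[u]$ can fail to be locally integrable near the zero set of $\rho[u]$, and the compensating factor $\prod_j\rho[u](x_j)$ is evaluated at $X$ while the singularity of $F^\eps$ sits at $Y$; a splitting $(F^\eps-F)*\phi^\eps + (F*\phi^\eps - F)$ with quantitative control is needed, but this is repairable.

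The genuine gap is in the $H^1$ step. Pointwise a.e.\ convergence of $\nabla u^\eps$ together with a \emph{uniform} $L^2$ bound does \emph{not} give strong $L^2$ convergence of the gradients; Brezis--Lieb only gives $\norm{\nabla u^\eps}^2 - \norm{\nabla u^\eps - \nabla u}^2 \to \norm{\nabla u}^2$, so one still needs $\norm{\nabla u^\eps}_{L^2}\to\norm{\nabla u}_{L^2}$ (or at least $\limsup_\eps\norm{\nabla u^\eps}_{L^2}\leq\norm{\nabla u}_{L^2}$, since weak $H^1$ convergence already gives the $\liminf$). A crude bound $\norm{\nabla u^\eps}_{L^2}\leq C\norm{u}_{H^1}$ with any $C>1$ is useless here; you need a sharp, asymptotically tight estimate on the Fisher information $\int|\nabla\Theta^\eps[u]|^2/\Theta^\eps[u]$, and that is precisely the hard quantitative content hidden in the cited theorem. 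Your formula for $\nabla_{x_j}\Theta^\eps[u]$ and the observation that the term $\frac{\nabla\rho[u](x_j)}{\rho[u](x_j)}\Theta^\eps[u]$ integrates cleanly using the marginal identity are good starting points, but the mixed terms and the contribution from $\nabla_{y_j}F^\eps$ (where $\nabla\rho^\eps[u]/\rho^\eps[u]$ appears inside a convolution against $\phi^\eps$, evaluated at a shifted point) do not obviously organize into something with limsup bounded by $\int|\nabla u|^2$. The claimed ``cancellation'' is heuristic only; making it precise, uniformly in the degenerate set, is exactly where the proof lives, and the proposal leaves that step entirely open.
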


We are now able to prove the following

\begin{thm} \label{sym-H1} Let $\rho_n,\rho \in \mathcal{R}$ such that $\sqrt{\rho_n} \to \sqrt{\rho}$ in $H^1(\R^d)$, and let $\varphi \in H^1(\rnd)$ symmetric and non-negative be such that $\rho[\varphi] = \rho$. Then there exist $u_n \in H^1(\rnd)$ such that $u_n \to \varphi$ in $H^1(\rnd)$ and $\rho[u_n] = \rho_n$.
\end{thm}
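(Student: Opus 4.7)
The plan is to combine the $L^2$ approximation from Theorem \ref{main-thm} with the Sobolev regularizer $\Theta^\eps$ through a diagonal argument, using the two convergence properties \ref{H1-H1-continuity} and \ref{eps-continuity} already at our disposal.

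First, apply Theorem \ref{main-thm} to obtain a sequence $\varphi_n \in L^2(\rnd)$ that is symmetric and non-negative, satisfies the exact marginal constraint $\rho[\varphi_n] = \rho_n$, and converges to $\varphi$ in $L^2(\rnd)$. Since $\sqrt{\rho_n} \to \sqrt{\rho}$ in $H^1(\R^d)$ by assumption, the pair $(\varphi_n, \varphi)$ meets the hypotheses of Theorem \ref{H1-H1-continuity}, giving for every \emph{fixed} $\eps > 0$ the $H^1$-convergence $\varphi_n^\eps \to \varphi^\eps$ in $H^1(\rnd)$. On the other hand, since $\varphi \in H^1(\rnd)$ is non-negative, Theorem \ref{eps-continuity} yields $\varphi^\eps \to \varphi$ in $H^1(\rnd)$ as $\eps \to 0^+$. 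Moreover, by Lemma \ref{basic-properties}(ii), each $\varphi_n^\eps$ automatically carries the prescribed marginal: $\rho[\varphi_n^\eps] = \rho[\varphi_n] = \rho_n$, and symmetry is preserved by the construction.

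Now we reconcile the two scales $n$ and $\eps$ by a standard diagonal extraction. For each integer $k \geq 1$, choose $\eps_k > 0$ such that $\norm{\varphi^{\eps_k} - \varphi}_{H^1(\rnd)} < 1/k$ (possible by Theorem \ref{eps-continuity}). With this $\eps_k$ fixed, Theorem \ref{H1-H1-continuity} gives an integer $N_k$, which we may take strictly increasing in $k$, such that
\[
\norm{\varphi_n^{\eps_k} - \varphi^{\eps_k}}_{H^1(\rnd)} < \frac{1}{k} \quad \text{for every } n \geq N_k.
\]
Define $u_n \eqdef \varphi_n^{\eps_{k(n)}}$, where $k(n)$ is the largest $k$ with $N_k \leq n$. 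Then $\rho[u_n] = \rho[\varphi_n] = \rho_n$ and
\[
\norm{u_n - \varphi}_{H^1(\rnd)} \leq \norm{\varphi_n^{\eps_{k(n)}} - \varphi^{\eps_{k(n)}}}_{H^1} + \norm{\varphi^{\eps_{k(n)}} - \varphi}_{H^1} < \frac{2}{k(n)} \to 0,
\]
since $k(n) \to \infty$ by construction. Each $u_n$ is symmetric, lies in $H^1(\rnd)$, and (being a square root of a non-negative integrand) is non-negative.

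There is no serious obstacle here: the whole argument is essentially a bookkeeping of convergence rates, and the heavy lifting has been carried out in the preceding sections, namely the marginal-preserving $L^2$ approximation of $\varphi$ (Theorem \ref{main-thm}) and the analytic properties of $\Theta^\eps$ established in \cite{bindini2019smoothing} and recalled in Theorems \ref{H1-H1-continuity}--\ref{eps-continuity}. The only mild subtlety is that $\Theta^\eps$ does not preserve norms between the two $L^2$-topologies uniformly in $\eps$, which is precisely why the diagonal extraction is needed rather than a naive choice $\eps_n \to 0$ independent of the $n$-scale.
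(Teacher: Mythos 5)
Your proof is correct and follows essentially the same route as the paper: take the $L^2$-approximating sequence $\varphi_n$ from Theorem \ref{main-thm}, apply the smoothing operator $\Theta^\eps$, use Lemma \ref{basic-properties}(ii) to preserve the marginals, and combine Theorems \ref{H1-H1-continuity} and \ref{eps-continuity} via a diagonal extraction in the scales $n$ and $\eps$. The paper fixes the scale $\eps_k = 2^{-k}$ explicitly and lets Theorem \ref{eps-continuity} handle the second term asymptotically, while you select $\eps_k$ so both error terms are $< 1/k$ --- a cosmetic difference only.
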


\begin{proof}
  Let $\varphi_n$, $\varphi$ defined in Section \ref{L2-sequence}: the idea is to take a suitable diagonal sequence $u_n \eqdef (\Theta^{\eps(n)}[\varphi_n])^{1/2}$. Let $N_0 = 1$, and for $k \geq 1$ choose $N_k \in \N$ such that
  \begin{enumerate}[(i)]
    \item $N_k > N_{k-1};$
    \item $\norm{\sqrt{\Theta^{2^{-k}}[\varphi_n]} - \sqrt{\Theta^{2^{-k}}[\varphi]}} \leq 2^{-k}$ for every $n \geq N_k$.
 \end{enumerate}
  
 The sequence $(N_k)_{k \geq 0}$ is well defined due to Theorem \ref{H1-H1-continuity} and increasing. Given $n \geq 1$, let $k$ be such that $N_k \leq n < N_{k+1}$, and set $\eps(n) = 2^{-k}$. When $N_k \leq n < N_{k+1}$, by construction we have
 \begin{ieee*}{rCl}
   \norm{\sqrt{\Theta^{\eps(n)}[\varphi_n]} - \varphi} & \leq & \norm{\sqrt{\Theta^{2^{-k}}[\varphi_n]} - \sqrt{\Theta^{2^{-k}}[\varphi]}} + \norm{\sqrt{\Theta^{2^{-k}}[\varphi]} - \varphi} \\
   & \leq & 2^{-k} + \norm{\sqrt{\Theta^{2^{-k}}[\varphi]} - \varphi}.
 \end{ieee*}
 
 As $n \to \infty$, also $k \to \infty$ and the right-hand side goes to zero due to Theorem \ref{eps-continuity}.
\end{proof}

To avoid any confusion, in the following we shall denote by $(\varphi_n)$ a sequence such that $\varphi_n \to \varphi$ in $H^1(\rnd)$ and $\rho[\varphi_n] = \rho_n$.

\begin{rema}
  If the original wave-function was symmetric and non-negative, then by Theorem \ref{sym-H1} we already get the desired approximating wave-functions which are also symmetric and non-negative, thus proving Theorem \ref{main-thm-1}.
\end{rema}

%%%%%%%%%%%%%%%%%%%%%%%%%%%%%%%%%%%%%%%%%%%%%%%%%%%%%%%%%%%
%%%%%%%%%%%%%%%%%%%%%%%%%%%%%%%%%%%%%%%%%%%%%%%%%%%%%%%%%%%

\section{Approximation with signs} \label{sign-section}

%%%%%%%%%%%%%%%%%%%%%%%%%%%%%%%%%%%%%%%%%%%%%%%%%%%%%%%%%%%
%%%%%%%%%%%%%%%%%%%%%%%%%%%%%%%%%%%%%%%%%%%%%%%%%%%%%%%%%%%
 Let $\rho_n, \rho, \varphi_n$ and $\varphi$ like in the previous sections, and assume now that $\varphi = \abs{\psi}$, where $\psi \in H^1((\R^d)^N; \R)$. In this section, starting from $\varphi_n$, we will construct $\psi_n \in H^1 ((\R^d)^N; \C)$ such that
 $\psi_n \mapsto \rho_n$ and $\psi_n \stackrel{H^1}{\to} \psi$. Some weighted Sobolev spaces will be the main tool of the construction. 
 
To every measurable $\lambda: \rnd \to \R^n$ (scalar or vectorial) we may associate some spaces related to the measure $|\lambda|^2 (X) dX$. In particular this will be used for $\lambda$ equal to the wave function $\psi$ or equal to the gradient of the wave function $\nabla \psi$.  
The most natural space is 
$$ L^2 (|\lambda|^2 dX; \C):= \gra{f \colon \rnd \to \C \st \int |f(X)|^2 |\lambda (X)|^2 dX < +\infty}.$$
When $\lambda  \in H^1 (\rnd; \R)$ we may define also the Sobolev spaces relative to the measure $|\lambda| ^2 dX = \lambda^2 dX$.   
First  we need a definition of the gradient: 
\begin{defi}  If $f \in L^2(\lambda^2 dX)$, the gradient $\onab f$ is defined by the identity
\begin{equation}
\int \onab f \varphi \lambda^2 dX = - \int f \nabla \varphi \lambda ^2 dX - 2 \int f  \varphi \frac{\nabla \lambda}{\lambda} \lambda^2 dX \quad \forall \varphi \in \Con^\infty _c.
\end{equation}
\end{defi}
It is then natural to define the Sobolev space
$$ H^1 (\lambda^2 dX):= \gra{f \in L^2 (\lambda^2 dX) \st  \int |\onab f(X)|^2 \lambda^2 dX < +\infty}.$$
\begin{rema} \label{freg}
  If $f$ is a $\Con^1$ function then $\onab f = \nabla f$, the usual gradient. Indeed, if $\varphi \in \mathcal{C}_c^\infty$, then
  \begin{ieee*}{rCl}
    \int \onab f \varphi \lambda^2 dX & = & - \int f \nabla \varphi \lambda ^2 dX - 2 \int f  \varphi \frac{\nabla \lambda}{\lambda} \lambda^2 dX \\
    & = & \int \nabla f \varphi \lambda ^2 dX + 2 \int f  \varphi \frac{\nabla \lambda}{\lambda} \lambda^2 dX - 2 \int f  \varphi \frac{\nabla \lambda}{\lambda} \lambda^2 dX \\
    & = & \int \nabla f \varphi \lambda ^2 dX.
  \end{ieee*} 
\end{rema}

\begin{rema} \label{Leibniz}
  If $f,g \in H^1(\lambda^2dX)$, and $fg, f\onab g, g\onab f \in L^2(\lambda^2dX)$, then $fg \in H^1(\lambda^2dX)$ and $\onab (fg) = f\onab g + g\onab f$. This is Corollary 2.6 in \cite{cattiaux1996entropy}.
\end{rema}

%\begin{exam}
%  If $\lambda$ is a $\Con^\infty _c $ function with support in the open interval $(0,1)$ then the function $f(x)= sign (x)$ is in  $ H^1 (\lambda^2 dX)$.
%\end{exam}

The next construction relates these ideas to the objects we know from the previous sections. Let $\psi \in H^1 (\rnd) $ so that also $|\psi| \in H^1 ( \rnd)$; then there exists a measurable function $e: \rnd \to \{-1, 1\}$ such that 
$\psi = e |\psi|$.  The function $e$ coincides almost everywhere with $ \psi / |\psi| $ in the set where $\psi \neq 0$. From now on, let $\lambda = \abs{\psi}$.

\begin{lemma}\label{regolarite}  It holds that $e \in H^1 (|\psi|^2 dX)$ and $ \onab e= 0$ $|\psi|^2 dX-a.e.$ Moreover, since $|e|\leq 1$,  $e \in L^2 (|\nabla \psi|^2 dX)$. 
\end{lemma}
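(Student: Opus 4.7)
My plan is to verify the three claims directly, using the algebraic identity $\psi\nabla\lambda = \lambda\nabla\psi$ (valid $dX$-a.e.\ in $\rnd$) as the workhorse.

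The two $L^2$ inclusions will be immediate: since $|e|\leq 1$ pointwise,
\[
\int |e|^2 \lambda^2\, dX \leq \int |\psi|^2\, dX < \infty \qquad \text{and} \qquad \int |e|^2 |\nabla\psi|^2\, dX \leq \int |\nabla\psi|^2\, dX < \infty,
\]
so $e\in L^2(|\psi|^2 dX)\cap L^2(|\nabla\psi|^2 dX)$.

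The substantive step is to verify $\onab e=0$, i.e.\ that for every $\varphi \in \Con^\infty_c$
\[
-\int e\,\nabla\varphi\, \lambda^2\, dX - 2\int e\,\varphi\, \frac{\nabla\lambda}{\lambda}\lambda^2\, dX = 0.
\]
First I would use $e\lambda=\psi$ to rewrite the integrands as $\psi\lambda\,\nabla\varphi$ and $\psi\varphi\,\nabla\lambda$ respectively (so the potentially troublesome factor $1/\lambda$ disappears after multiplication by $\lambda^2$), reducing the claim to
\[
\int \psi\lambda\,\nabla\varphi\, dX + 2\int \psi\varphi\,\nabla\lambda\, dX = 0.
\]
Next I would derive the key identity by applying the Sobolev chain rule to $\lambda^2 = \psi^2$: this gives $\lambda\nabla\lambda = \psi\nabla\psi$ a.e., and multiplying by $e$ (using that $\nabla\psi=0$ a.e.\ on $\{\psi=0\}$ by Stampacchia) yields $\psi\nabla\lambda=\lambda\nabla\psi$ a.e. Then $\nabla(\psi\lambda) = \lambda\nabla\psi + \psi\nabla\lambda = 2\lambda\nabla\psi$, so an integration by parts on the first integral produces $\int\psi\lambda\,\nabla\varphi\, dX = -2\int\lambda\,\nabla\psi\,\varphi\, dX$, which exactly cancels the second term.

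The main obstacle will be justifying that $\psi\lambda = e\psi^2$ is itself a classical Sobolev function with gradient $2\lambda\nabla\psi$, so that the integration by parts above is legitimate. I plan to obtain this by applying the chain rule to the $C^1$ Lipschitz map $t\mapsto t|t|$ at the Sobolev level (Remark \ref{Leibniz} is phrased for weighted spaces and is not quite applicable here, so one uses the classical unweighted product rule instead). In either case, the zero set $\{\psi=0\}$ is the delicate point and is handled via the Stampacchia vanishing of $\nabla\psi$ there.
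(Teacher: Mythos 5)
Your proposal is correct and follows essentially the same route as the paper: rewrite $e\lambda^2 = \psi|\psi|$, integrate by parts using $\nabla(\psi|\psi|) = 2|\psi|\nabla\psi$, and observe that the two terms in the definition of $\onab e$ cancel. You merely spell out the zero-set and regularity subtleties (Stampacchia, the chain rule for $t\mapsto t|t|$) more explicitly than the paper, which invokes $\nabla|\psi|=\frac{\psi}{|\psi|}\nabla\psi$ directly.
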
 
\begin{proof} Since $|e|=1$ $|\psi|^2 dX-a.e.$,  $e \in L^2 (|\psi|^2 dX).$  Let $\varphi \in \Con^\infty _c$,
$$ \int \onab e \varphi |\psi|^2 d X = - \int e \nabla \varphi |\psi|^2 dX - 2 \int e \varphi |\psi| \frac{\psi}{|\psi|} \nabla \psi dX=0, $$  
since 
$$\int e \nabla \varphi |\psi|^2 dX = \int \nabla \varphi \psi |\psi|dX= -  \int \varphi \nabla \psi |\psi| dX  - \int \varphi  \psi \frac{\psi}{|\psi|} \nabla \psi dX = -2  \int \varphi  |\psi| \nabla \psi dX .$$
\end{proof}

We are interested in smooth approximations in these Sobolev spaces. This question is well studied in the literature. The following is a consequence of \cite[Theorem 2.7]{cattiaux1996entropy}; we invite the reader to see also the references in that paper for a more complete picture.
 
\begin{thm}\label{smoothe} There exists a sequence $\{e_n\} \in \Con^\infty \cap H^1(|\psi|^2 dX)$ such that 
\begin{enumerate}[i)]
\item $|e_n| \leq 1$, 
\item $e_n \to e$ in $ H^1 (|\psi|^2 dX)$, 
\item $ e_n \to e$ in $L^2 (| \nabla \psi|^2 dX)$.
\end{enumerate}
\end{thm}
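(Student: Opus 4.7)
The plan is to apply the density result \cite[Theorem 2.7]{cattiaux1996entropy} to obtain smooth approximants of $e$ in $H^1(\abs{\psi}^2 dX)$, and then compose them with a smoothed truncation onto $[-1,1]$ in order to enforce the pointwise bound $\abs{e_n} \leq 1$ without spoiling the convergence. First I would invoke Cattiaux's theorem to produce $\tilde{e}_n \in \Con^\infty \cap H^1(\abs{\psi}^2 dX)$ with $\tilde{e}_n \to e$ in $H^1(\abs{\psi}^2 dX)$; since Lemma \ref{regolarite} gives $\onab e = 0$, one has in particular $\onab \tilde{e}_n \to 0$ in $L^2(\abs{\psi}^2 dX)$. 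The approximants are not a priori bounded, hence a truncation step is needed.

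Next I would fix a family of smooth maps $\Phi_\delta \colon \R \to [-1,1]$ obtained by mollifying the clipping function $t \mapsto \max(-1,\min(1,t))$ at scale $\delta$, so that $\abs{\Phi_\delta'} \leq 1$ on $\R$ and $\abs{\Phi_\delta(t) - t} \leq \delta$ for every $t \in [-1,1]$. Setting $e_n \eqdef \Phi_{\delta_n}(\tilde{e}_n)$ for a suitable sequence $\delta_n \to 0$, the $e_n$ are smooth with $\abs{e_n} \leq 1$ by construction. Because $\tilde{e}_n$ is $\Con^\infty$, Remark \ref{freg} and the classical chain rule yield
\[ \onab e_n = \Phi_{\delta_n}'(\tilde{e}_n)\, \onab \tilde{e}_n, \]
so $\abs{\onab e_n} \leq \abs{\onab \tilde{e}_n} \to 0$ in $L^2(\abs{\psi}^2 dX)$. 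Using $\abs{e} \leq 1$ almost everywhere with respect to $\abs{\psi}^2 dX$ together with the $1$-Lipschitz character of $\Phi_{\delta_n}$,
\[ \abs{e_n - e} \leq \abs{\Phi_{\delta_n}(\tilde{e}_n) - \Phi_{\delta_n}(e)} + \abs{\Phi_{\delta_n}(e) - e} \leq \abs{\tilde{e}_n - e} + \delta_n, \]
and the right-hand side tends to $0$ in $L^2(\abs{\psi}^2 dX)$, completing the convergence in the weighted $H^1$.

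For the convergence $e_n \to e$ in $L^2(\abs{\nabla\psi}^2 dX)$ I would pass to a subsequence along which $\tilde{e}_n \to e$ pointwise almost everywhere on $\{\psi \neq 0\}$, so that $e_n = \Phi_{\delta_n}(\tilde{e}_n) \to e$ at the same points; on $\{\psi = 0\}$ one has $\nabla \psi = 0$ almost everywhere, hence the convergence holds $\abs{\nabla\psi}^2 dX$-almost everywhere as well. Since $\abs{e_n - e}^2 \leq 4$ and $\abs{\nabla\psi}^2 \in L^1$, dominated convergence closes the argument. The main obstacle sits entirely in the first step, namely the density of $\Con^\infty$ in the weighted Sobolev space when the weight is only assumed to be in $H^1$; this is exactly the content of Theorem 2.7 in \cite{cattiaux1996entropy}, and everything else reduces to a chain-rule estimate and a dominated convergence.
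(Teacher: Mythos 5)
Your route is genuinely different from the paper's. The paper truncates \emph{spatially}, replacing $e$ by $e\,c_n$ with smooth cut-offs $c_n$ (which keeps $\abs{e\,c_n}\leq 1$ automatically), and invokes \cite[Theorem 2.7]{cattiaux1996entropy} only to assert that the mollifications $J_\eps*(e c_n)$ converge to $e c_n$ in $H^1(\abs{\psi}^2\,dX)$ for each fixed $n$; the bound $\abs{e_n}\leq 1$ then survives simply because $J_\eps\geq 0$ and $\int J_\eps=1$. You instead treat the citation as a full density statement producing possibly unbounded smooth $\tilde e_n\to e$, and restore the pointwise bound by post-composing with a smoothed clipping $\Phi_{\delta_n}$. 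Your chain-rule computation together with $\onab e=0$ (so $\onab e_n=\Phi'_{\delta_n}(\tilde e_n)\,\nabla\tilde e_n\to 0$), the Lipschitz estimate $\abs{e_n-e}\leq\abs{\tilde e_n-e}+\delta_n$, and the subsequence--plus--dominated--convergence argument for iii) (using that $\nabla\psi=0$ a.e.\ on $\{\psi=0\}$, so $\abs{\nabla\psi}^2\,dX$ is carried by $\{\psi\neq 0\}$) are all correct; the latter is more abstract than the paper's explicit tail estimates but equally valid. The one step that deserves scrutiny is the direct invocation of Cattiaux: the paper's proof deliberately inserts the cut-off $c_n$ \emph{before} appealing to that theorem, which indicates that Theorem~2.7 of \cite{cattiaux1996entropy} gives convergence of mollifications for compactly supported (or suitably decaying) elements of $H^1(\abs{\psi}^2\,dX)$ rather than whole-space density. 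If that is indeed its scope, you need the same preliminary cut-off, after which your $\Phi_{\delta_n}$ step becomes superfluous since $\abs{e c_n}\leq 1$ already; even so, the clipping device you introduce is a clean general-purpose substitute whenever the smoothing procedure does not automatically preserve the $L^\infty$ bound.
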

\begin{proof}
	Choose a sequence of smooth cut-off functions $(c_n)_{n \geq 1}$ such that:
  \begin{itemize}
    \item $0 \leq c_n \leq 1$;
    \item $c_n \equiv 1$ on $B(0,n-1)$, $c_n \equiv 0$ on $B(0,n)^c$;
    \item $\Lip(c_n) \leq 2$.
  \end{itemize}
  
  First we consider $e\cdot c_n$ and we prove that it satisfies the properties i)-iii) above. The property i) is obvious. Also the $L^2$-convergence is easy:
  \[ \int \abs{ec_n - e} \lambda^2 dX \leq \int_{B(0,n)^c} \lambda^2 dX \quad \text{and} \quad \int \abs{ec_n - e}\abs{\nabla\lambda}^2 dX \leq \int_{B(0,n)^c} \abs{\nabla\lambda}^2 dX  \]
  converge to zero since $\lambda, \nabla \lambda \in L^2(\R^{Nd})$. Combining Remark \ref{freg} and Remark \ref{Leibniz} we have $\onab (ec_n) = c_n \onab e + e \onab c_n = e \onab c_n = e \nabla c_n$, and we must prove that it converge to 0 in $L^2(\lambda^2 dX)$. Indeed we have
  \[ \int \abs{e \nabla c_n}^2 \lambda^2 dX \leq 4 \int_{B(0,n-1)^c} \lambda^2 dX. \]
  
  Now the second step is to regularize by convolution with a standard mollifier of compact support $J_\eps$ defined by $J_\eps(X) = 1/\eps^{Nd} J(X/\eps)$, where $J$ is non-negative and supported in the unit ball, with $\int J = 1$. It is shown in \cite[Theorem 2.7]{cattiaux1996entropy} that $J_\eps * (ec_n)$ converge to $ec_n$ for fixed $n$ as $\eps \to 0$. Thus it suffices to take $\eps_n$ small enough so that $\norm{J_{\eps_n} * (ec_n) - ec_n}_{H^1(\lambda^2 dX)}$ converges to zero to conclude.
\end{proof}

\begin{rema} \label{symm-e}
  If the function $e$ is symmetric, it is possible to make $e_n$ to be symmetric as well. It suffices to choose $c_n$ (the cut-off functions) to be symmetric. Then the process of convolution maintains symmetry if the kernel is symmetric (as for instance the one defined by \eqref{gaussian-kernel}).
\end{rema}

In order to have a good behaviour of the approximating sequence $(e_n)_{n\geq 1}$ for the estimates that will be needed in the proof of Theorem \ref{final-thm}, we must also control the Lipschitz constant of $e_n$. This may be done as a consequence of the following

\begin{lemma} \label{subsequence}
  Given sequences of non-negative real numbers $(M_n)$ and $(a_k)$ such that $a_k \to 0$, there exists a choice $(n_k)$ of indexes such that
  \begin{enumerate}[(i)]
    \item $n_k \nearrow +\infty$;
    \item $M_{n_k} a_k \to 0$.
  \end{enumerate}
\end{lemma}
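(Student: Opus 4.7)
The plan is to construct $(n_k)$ by an explicit thresholding argument, exploiting the fact that we are allowed to let $n_k$ grow as slowly as we wish. The key observation is that for each candidate index $j$, the number $M_j$ is a fixed finite constant, so the hypothesis $a_k \to 0$ lets us force $M_j a_k$ to be as small as we want for all sufficiently large $k$, with the threshold depending only on $j$. We then build $n_k$ by a ``staircase'' that climbs at just the right pace.

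More precisely, for each integer $j \geq 1$ I would first choose a positive integer $K_j$ such that
\[
a_k \leq \frac{1}{j(1+M_j)} \quad \text{for every } k \geq K_j,
\]
which is possible since the right-hand side depends only on $j$ and $a_k \to 0$. Replacing $K_j$ with $\max(K_j, K_{j-1}+1)$ inductively if needed, I may assume that $K_1 < K_2 < \cdots$ and $K_j \to +\infty$.

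Next I would define
\[
n_k = \max\{ j \geq 1 : K_j \leq k \} \quad \text{for } k \geq K_1,
\]
(and $n_k = 1$ for the finitely many $k < K_1$). By construction $(n_k)$ is non-decreasing, and $n_k \geq j$ whenever $k \geq K_j$, which together with $K_j \to +\infty$ gives $n_k \nearrow +\infty$. Moreover, if $n_k = j$ then $k \geq K_j$ and therefore
\[
M_{n_k} a_k = M_j a_k \leq \frac{M_j}{j(1+M_j)} \leq \frac{1}{j} = \frac{1}{n_k},
\]
which tends to $0$ as $k \to \infty$, establishing (ii).

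I do not expect any serious obstacle in this lemma: the content is just that we are allowed to trade off two freedoms (making $n_k$ grow slowly against choosing $k$ large enough for $a_k$ to be small). The only bookkeeping concern is to arrange the thresholds $K_j$ consistently so that $(n_k)$ is well defined and non-decreasing; this is handled by the strictly increasing choice of $K_j$ above.
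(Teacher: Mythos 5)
Your proof is correct and takes essentially the same approach as the paper: both choose, for each index $j$, a threshold $K_j$ (strictly increasing in $j$) beyond which $M_j a_k$ is controlled, and then define $n_k$ as the "staircase" that stays at level $j$ while $K_j \leq k < K_{j+1}$ (your $\max$ formulation is equivalent). The only cosmetic difference is the bound used at each level ($1/j$ via the factor $j(1+M_j)$ versus the paper's $2^{-n}$); both choices work and both handle $M_j = 0$ without trouble.
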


\begin{proof}
  Given $n$, let $K(n)$ such that $M_na_k < 2^{-n}$ for all $k \geq K(n)$, and choose also $K(n+1) > K(n)$. Now we define the sequence $(n_k)$ as follows:
  \[ n_k = \begin{cases}
  1 & \text{if } k < K(1) \\ n & \text{if } K(n) \leq k < K(n+1).
  \end{cases} \]
  
  By construction we have $M_{n_k}a_k < 2^{-n}$ for all $k \geq K(n)$, thus proving (ii). On the other hand, given $L \in \N$, if $k \geq K(L)$ we have $n_k \geq L$, which proves (i).
\end{proof}

\begin{coro} \label{LipCoro}
  Given $(a_n)$ such that $a_n \to 0$, the sequence in Theorem \ref{smoothe} may be chosen such that $\Lip(e_n)a_n \to 0$.
\end{coro}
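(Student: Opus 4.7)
The plan is to simply combine Theorem \ref{smoothe} with Lemma \ref{subsequence}: pass to a subsequence of the approximating sequence so that the (finite, but a priori growing) Lipschitz constants are killed by the prescribed weights $a_n$.

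First I would invoke Theorem \ref{smoothe} to obtain a sequence $(\tilde e_n) \subset \mathcal{C}^\infty \cap H^1(\abs{\psi}^2 dX)$ satisfying properties (i)--(iii) of that theorem. The next observation is that each $\tilde e_n$ has finite Lipschitz constant: going back to the construction in the proof, $\tilde e_n = J_{\eps_n} * (e c_n)$, and since $e c_n$ is bounded by $1$ and supported in the ball $B(0,n)$, the convolution $J_{\eps_n} * (e c_n)$ is a smooth function with support in $B(0,n+\eps_n)$ and gradient bounded by $\norm{\nabla J_{\eps_n}}_{L^1} \norm{e c_n}_{L^\infty} \leq C/\eps_n$. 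In particular $M_n \eqdef \Lip(\tilde e_n) < +\infty$ for every $n$.

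Now I would apply Lemma \ref{subsequence} to the sequences $(M_n)_{n \geq 1}$ and $(a_k)_{k \geq 1}$. This yields indices $n_k \nearrow +\infty$ such that $M_{n_k} a_k \to 0$. Define $e_k \eqdef \tilde e_{n_k}$. By construction $\Lip(e_k) a_k = M_{n_k} a_k \to 0$, which is the desired Lipschitz condition.

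Finally I would verify that the new sequence $(e_k)$ still satisfies properties (i)--(iii) of Theorem \ref{smoothe}: property (i) is immediate since $\abs{e_k} = \abs{\tilde e_{n_k}} \leq 1$, and properties (ii)--(iii) hold because a subsequence of a convergent sequence in $H^1(\abs{\psi}^2 dX)$ (respectively in $L^2(\abs{\nabla\psi}^2 dX)$) still converges to the same limit $e$. There is essentially no obstacle here; the only slightly delicate point is the finiteness of the Lipschitz constants, which requires recalling that the construction in Theorem \ref{smoothe} produces compactly supported smooth functions, and everything else is a routine application of Lemma \ref{subsequence}.
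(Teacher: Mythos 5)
Your proof is correct and takes the same route as the paper: apply Lemma \ref{subsequence} with $M_n = \Lip(e_n)$ and pass to the resulting subsequence. You add the (reasonable and correct, though left implicit in the paper) justifications that each $\Lip(e_n)$ is finite, since $e_n = J_{\eps_n} * (ec_n)$ is smooth and compactly supported, and that properties (i)--(iii) of Theorem \ref{smoothe} survive passage to a subsequence.
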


\begin{proof}
  Apply Lemma \ref{subsequence} with $M_n = \Lip (e_n)$ to select a suitable sequence $(e_{n_k})$ with the desired property.
\end{proof}

\begin{defi}\label{lifting} Let $\omega \in \Con ^1 ([-1,1], S^1_+)$  be defined by
	$$ s \mapsto e^{i(1-s)\frac{\pi}{2}}.$$
\end{defi}
The function $\omega$ is such that $\abs{\omega} = 1$, $\omega(-1) = -1$ and $\omega(1) = 1$ so that $\omega(e(x)) =e(x)$ a.e. in the set $\psi \neq 0$. Moreover, observe that $|\omega'|= \frac{\pi}{2}$ and $|\omega(s)-\omega(t)| \leq \frac{\pi}{2} \abs{s-t}$ for all $s,t \in [-1,1]$.
 
\begin{thm} \label{final-thm} Let $\omega$ be the function defined above. Let $e_n \in \Con^\infty_c $ with values in  $[-1,1]$ be such that $e_n \to e$ in $ H^1 (|\psi|^2 dX)$ and 
$ L^2 (|\nabla \psi|^2 dX)$.  Then $\psi_n:= \omega(e_n) \varphi_n \to \psi$ in $H^1$.
\end{thm}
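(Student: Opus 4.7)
Write $\lambda := |\psi|$, so that $\psi = \omega(e)\lambda$ (since $\omega(\pm 1)=\pm 1$) and, by Theorem \ref{sym-H1}, $\varphi_n\to\lambda$ in $H^1(\rnd)$. I would control separately $\|\psi_n-\psi\|_{L^2}$ and $\|\nabla\psi_n-\nabla\psi\|_{L^2}$, exploiting that $|\omega|\equiv 1$, that $\omega$ is $(\pi/2)$-Lipschitz, and the two hypotheses on $e_n$. For the $L^2$-norm, decompose
\[
\psi_n-\psi \;=\; \omega(e_n)(\varphi_n-\lambda) \;+\; (\omega(e_n)-\omega(e))\,\lambda,
\]
so that the triangle inequality yields $\|\psi_n-\psi\|_{L^2}\le \|\varphi_n-\lambda\|_{L^2} + \tfrac{\pi}{2}\|e_n-e\|_{L^2(\lambda^2 dX)}$, and both terms tend to $0$.

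\textbf{Gradient decomposition.} Since $e_n\in\Con^\infty_c$, the classical chain rule gives $\nabla\psi_n = \omega'(e_n)\,\varphi_n\,\nabla e_n + \omega(e_n)\,\nabla\varphi_n$. For $\nabla\psi$ itself, using $\psi=e\lambda$ with $e\in\{-1,1\}$ and the fact that $\nabla\lambda=0$ a.e.\ on $\{\lambda=0\}$, one checks that $\nabla\psi=\omega(e)\,\nabla\lambda$ almost everywhere. Subtracting gives the three-piece decomposition
\[
\nabla\psi_n-\nabla\psi \;=\; \omega'(e_n)\,\varphi_n\,\nabla e_n \;+\; \omega(e_n)\bigl(\nabla\varphi_n-\nabla\lambda\bigr) \;+\; \bigl(\omega(e_n)-\omega(e)\bigr)\nabla\lambda.
\]
The second term has $L^2$-norm bounded by $\|\nabla\varphi_n-\nabla\lambda\|_{L^2}\to 0$ (from the $H^1$-convergence of $\varphi_n$), and the third by $\tfrac{\pi}{2}\|e_n-e\|_{L^2(|\nabla\psi|^2 dX)}\to 0$ by hypothesis, using $|\nabla\psi|=|\nabla\lambda|$ a.e.

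\textbf{The main obstacle.} The delicate term is the first one, and the task boils down to proving $\int\varphi_n^2|\nabla e_n|^2 dX\to 0$. I would split via the elementary bound $\varphi_n^2 \le 2\lambda^2+2(\varphi_n-\lambda)^2$. The first resulting piece, $2\int\lambda^2|\nabla e_n|^2 dX$, equals $2\|\onab e_n\|_{L^2(\lambda^2 dX)}^2$ by Remark \ref{freg} (since $e_n$ is $\Con^1$), and tends to zero because $\onab e_n\to \onab e=0$ in $L^2(\lambda^2 dX)$ by Lemma \ref{regolarite} and the hypothesis. The second piece is bounded by $2\Lip(e_n)^2\|\varphi_n-\lambda\|_{L^2}^2$; here is where Corollary \ref{LipCoro} is essential, applied with $a_n:=\|\varphi_n-\lambda\|_{L^2}\to 0$: the sequence $(e_n)$ is to be selected so that $\Lip(e_n)\,\|\varphi_n-\lambda\|_{L^2}\to 0$ (this is the implicit refinement the paper builds in through Corollary \ref{LipCoro}, without which the theorem as stated would fail because an $e_n$ with oscillations concentrated where $\lambda$ is small can make the first piece go to $0$ while keeping $\Lip(e_n)$ arbitrarily large). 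Under this Lipschitz control the second piece vanishes too, and combining everything with the $L^2$-convergence yields $\psi_n\to\psi$ in $H^1$.
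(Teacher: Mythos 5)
Your proposal is correct and follows essentially the same route as the paper: the same $L^2$ estimate, the same splitting of the gradient into a $\varphi_n\nabla e_n$ piece (controlled by splitting off $\lambda^2|\nabla e_n|^2$, which tends to zero by Theorem \ref{smoothe} ii) and Lemma \ref{regolarite}, plus a $\Lip(e_n)^2\|\varphi_n-\lambda\|_{L^2}^2$ piece handled via Corollary \ref{LipCoro} with $a_n=\|\varphi_n-\lambda\|_{L^2}$), and the same treatment of the remaining terms using $\varphi_n\to\lambda$ in $H^1$ and $e_n\to e$ in $L^2(|\nabla\psi|^2dX)$. Your observation that the Lipschitz refinement of $(e_n)$ is implicitly built into the choice of the sequence matches exactly how the paper uses Corollary \ref{LipCoro} in its own proof.
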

\begin{proof}  First the $L^2$ convergence which is easier. 
\begin{eqnarray*}
\|\psi_n-\psi\|_{L^2} &=& \|\omega(e_n) \varphi_n- e|\psi|\| _{L^2}\leq \|\omega(e_n) \varphi_n - \omega(e_n) |\psi| \|_{L^2} + \| \omega(e_n) |\psi| - e |\psi| \|_{L^2}\\
&= & \| \varphi_n -|\psi| \| _{L^2} + \| \omega(e_n) - e \|_{L^2 (|\psi|^2 dX)}\\
&\leq &  \| \varphi_n -|\psi| \| _{L^2}+ \| \omega(e_n) - \omega(e) \|_{L^2 (|\psi|^2 dX)} +
\| \omega(e) - e \|_{L^2 (|\psi|^2 dX)} \\ 
&\leq &  \| \varphi_n -|\psi| \| _{L^2}+ \frac{\pi}{2}\| e_n - e \|_{L^2 (|\psi|^2 dX)}.
\end{eqnarray*}
The last term converges to $0$ by Theorem \ref{smoothe} above.

For the $L^2$ convergence of gradients, let us first compute
$$ \nabla \psi_n= \omega'(e_n) \nabla e_n \varphi_n  + \omega (e_n) \nabla \varphi_n ,$$
$$ \nabla \psi= \nabla (e |\psi|) = e  \nabla |\psi|,$$
and in the second computation we used that $\nabla e=0$ a.e. where $|\psi| \neq 0$.
\begin{eqnarray*}
\|\nabla \psi_n- \nabla \psi \|_{L^2} &=& \|\omega' (e_n) \nabla e_n \varphi_n + \omega(e_n) \nabla \varphi_n - e \nabla | \psi \|\\
& \leq &  \|\omega'(e_n) \nabla e_n \varphi_n - \omega'(e_n) \nabla e_n |\psi| \| + \| \omega'(e_n) \nabla e_n |\psi|\| \\
& & {} + \| \omega(e_n) \nabla \varphi_n - e \nabla |\psi|\|. 
\end{eqnarray*}
The three terms on the right-hand-side above may be studied separately, the most difficult one being the first. We have 
$$ \|\omega'(e_n) \nabla e_n \varphi_n - \omega'(e_n) \nabla e_n |\psi| \|^2 \leq \frac{\pi}{2} \int |\nabla e_n|^2 |\varphi_n -|\psi||^2 dX \leq \frac{\pi}{2} \Lip (e_n)^2 \int |\varphi_n -|\psi||^2 dX.$$
The last term of the inequality converges to $0$ if we choose $a_n= \norm{\varphi_n - |\psi|}_{L^2}$ in Corollary \ref{LipCoro}.
 
The second term 
\[\| \omega'(e_n) \nabla e_n |\psi|\|^2 \leq \frac{\pi}{2} \int |\nabla e_n|^2 |\psi|^2 dX \]
and this goes to $0$ by Theorem \ref{smoothe} ii) and Lemma \ref{regolarite}.
Finally we control the third term by breaking it down again.
\begin{eqnarray*}
	\| \omega(e_n) \nabla \varphi_n - e \nabla |\psi|\|_{L^2} &\leq& \| \omega(e_n) \nabla \varphi_n - \omega(e_n) \nabla |\psi|\|_{L^2} + \| \omega(e_n) \nabla |\psi| -e \nabla |\psi| \|_{L^2} \\
	&\leq & \| \nabla \varphi_n - \nabla |\psi|\|_{L^2} + \| \omega(e_n) \nabla |\psi|- \omega(e) \nabla |\psi| \|_{L^2}\\
	&\leq&   | \nabla \varphi_n - \nabla |\psi|\|_{L^2} +  \frac{\pi}{2}\| e_n-e\|_{L^2  (|\nabla \psi|^2 dX)}.
\end{eqnarray*}
The last term converges to $0$ by the convergence of $\varphi_n$ to $|\psi|$ and by Theorem \ref{smoothe} iii). 
\end{proof}

In conclusion, notice that the approximating sequence built in this way maintains the symmetry property. Indeed, $\varphi_n$ is symmetric for every $n$, and so is the sign function $e$. By Remark \ref{symm-e} we may choose $e_n$ to be symmetric. Finally, if $e_n$ is symmetric, so is $\omega(e_n)$, and hence $\omega(e_n)\varphi_n$ is symmetric, finally proving Theorem \ref{main-thm-2}.

%%%%%%%%%%%%%%%%%%%%%%%%%%%%%%%%%%%%%%%%%%%%%%%%%%%%%%%%%%%

%%%%%%%%%%%%%%%%%%%%%%%%%%%%%%%%%%%%%%%%%%%%%%%%%
%%%%%%%%%%%%%%%%%%%%%%%%%%%%%%%%%%%%%%%%%%%%%%%%%
%%%%%%%%%%%%%%%%%%%%%%%%%%%%%%%%%%%%%%%%%%%%%%%%%

%%%%%%%%%%%%%%%%%%%%%%%%%%%%%%%%%%%%%%%%%%%%%%%%%%

\bibliographystyle{plain}

\nocite{ambrosio2008gradient, bindini2017optimal, bindini2019smoothing, buttazzo2018continuity, buttazzo2012optimal, cattiaux1996entropy, cotar2013density, cotar2018smoothing, hohenberg1964inhomogeneous, kohn1965self, lewin2018semi, levy1979universal, lieb2002density}

\bibliography{biblio.bib}

\begin{thebibliography}{10}

\bibitem{ambrosio2008gradient}
Luigi Ambrosio, Nicola Gigli, and Giuseppe Savar{\'e}.
\newblock {\em Gradient flows: in metric spaces and in the space of probability
  measures}.
\newblock Springer Science \& Business Media, 2008.

\bibitem{bindini2019smoothing}
Ugo Bindini.
\newblock Smoothing operators in multi-marginal optimal transport.
\newblock {\em arXiv preprint arXiv:1901.07407}, 2019.

\bibitem{bindini2017optimal}
Ugo Bindini and Luigi De~Pascale.
\newblock Optimal transport with coulomb cost and the semiclassical limit of
  density functional theory.
\newblock {\em Journal de l'École polytechnique}, 4:909--934, 2017.

\bibitem{buttazzo2018continuity}
Giuseppe Buttazzo, Thierry Champion, and Luigi De~Pascale.
\newblock Continuity and estimates for multimarginal optimal transportation
  problems with singular costs.
\newblock {\em Applied Mathematics \& Optimization}, 78(1):185--200, 2018.

\bibitem{buttazzo2012optimal}
Giuseppe Buttazzo, Luigi De~Pascale, and Paola Gori-Giorgi.
\newblock Optimal-transport formulation of electronic density-functional
  theory.
\newblock {\em Physical Review A}, 85(6):062502, 2012.

\bibitem{cattiaux1996entropy}
Patrick Cattiaux and Myriam Fradon.
\newblock Entropy, reversible diffusion processes, and markov uniqueness.
\newblock {\em journal of functional analysis}, 138(1):243--272, 1996.

\bibitem{cotar2013density}
Codina Cotar, Gero Friesecke, and Claudia Kl{\"u}ppelberg.
\newblock Density functional theory and optimal transportation with coulomb
  cost.
\newblock {\em Communications on Pure and Applied Mathematics}, 66(4):548--599,
  2013.

\bibitem{cotar2018smoothing}
Codina Cotar, Gero Friesecke, and Claudia Kl{\"u}ppelberg.
\newblock Smoothing of transport plans with fixed marginals and rigorous
  semiclassical limit of the hohenberg--kohn functional.
\newblock {\em Archive for Rational Mechanics and Analysis}, pages 1--32, 2018.

\bibitem{hohenberg1964inhomogeneous}
Pierre Hohenberg and Walter Kohn.
\newblock Inhomogeneous electron gas.
\newblock {\em Physical review}, 136(3B):B864, 1964.

\bibitem{kohn1965self}
Walter Kohn and Lu~Jeu Sham.
\newblock Self-consistent equations including exchange and correlation effects.
\newblock {\em Physical review}, 140(4A):A1133, 1965.

\bibitem{levy1979universal}
Mel Levy.
\newblock Universal variational functionals of electron densities, first-order
  density matrices, and natural spin-orbitals and solution of the
  v-representability problem.
\newblock {\em Proceedings of the National Academy of Sciences},
  76(12):6062--6065, 1979.

\bibitem{lewin2018semi}
Mathieu Lewin.
\newblock Semi-classical limit of the levy--lieb functional in density
  functional theory.
\newblock {\em Comptes Rendus Mathematique}, 356(4):449--455, 2018.

\bibitem{lieb2002density}
Elliott~H. Lieb.
\newblock Density functionals for coulomb systems.
\newblock In {\em Inequalities}, pages 269--303. Springer, 2002.

\end{thebibliography}

\end{document}